\newtheorem{theorem}{Theorem}
\newtheorem{lemma}{Lemma}
\newtheorem{proposition}{Proposition}
\begin{document}
\title{Local Unitary Equivalence of Tripartite Quantum States In Terms of Trace Identities}
\author{Isaac Dobes and Naihuan Jing}
\maketitle

\begin{abstract}
    In this paper, we generalize the major results from Jing-Yang-Zhao's paper "Local Unitary Equivalence of Quantum States and Simultaneous Orthogonal Equivalence," which established a correspondence between local unitary (LU) equivalence and simultaneous orthogonal (SO) equivalence of bipartite quantum states, and then used this correspondence to show that the problem of determining LU equivalence of bipartite states reduces to checking trace identities. In particular, we extend both results to tripartite quantum states. We are able to do this by utilizing a hypermatrix algebra framework and by applying a powerful generalization of Specht's criterion proved in Futorny-Horn-Sergeichuk's paper "Specht's Criterion for Systems of Linear Mappings." With our established hypermatrix algebra framework and the aforementioned generalization of Specht's criterion, it is apparent that our results can be extended to arbitrary multipartite quantum states, however there are some practical limitations which are explored and discussed towards the end. 
\end{abstract}

\section{Introduction}
Quantum entanglement is a peculiar phenomenon that has intrigued yet also eluded researchers for nearly a century \cite{einstein1935can}. In the last few decades, however, serious interest and efforts in understanding entanglement has grown rapidly due to its applications to quantum computing \cite{nielsen2001quantum,bengtsson2017geometry,yu2021advancements}. Since entanglement pertains to nonlocal properties of quantum states \cite{walter2016multipartite}, a choice of local basis should have no effect on the degree to which a quantum state is entangled \cite{monras2011entanglement}. Therefore, one important task in the study of entanglement is the classification of quantum states up to local unitary (LU) equivalence. 

Two density matrices $\rho$ and $\widehat{\rho}$ representing states in the same quantum system $\mathbb{C}^{d_1}\otimes...\otimes \mathbb{C}^{d_N}$ are \textbf{locally unitary (LU) equivalent} if there exist $U_i\in SU(d_i)$ ($1\leq i\leq N$) such that \cite{li2014local}
\[\widehat{\rho} = (U_1\otimes...\otimes U_N)\rho(U_1\otimes...\otimes U_N)^{\dagger}.\]
In general, it is difficult to determine LU equivalence between two density matrices. One of the first major accomplishments in characterizing LU equivalence was due to Yuriy Makhlin in \cite{makhlin2002nonlocal}, where he provided necessary and sufficient conditions for the LU equivalence of $2$-qubit density matrices in terms of their Fano form. The next major development came from Barbara Kraus in \cite{kraus2010local2}, where she characterized LU equivalence of pure $n$-qubit states in terms of their "standard form." Deriving a state's "standard form" involved diagonalizing the corresponding reduced states, and provided an efficient way to check for LU equivalence of pure $n$-qubit states; however, there are degenerate cases where Kraus's methods breaks down. Inspired by Kraus's work, Jun-Li Li, Cong-Feng Qiao, and others represent arbitrary multipartite quantum states and characterize their LU equivalence in terms of their higher order singular value decomposition \cite{liu2012local}. Li and Qiao later extend their results to mixed multipartite states of arbitrary dimension \cite{li2013classification}, however as with Kraus's method there is a degenerate cases in which their method is very computationally taxing. An alternative approach to characterizing LU equivalence in terms of trace identities is later given by Jing et al. in \cite{jing2016local}. In their paper, they establish a correspondence between simultaneous orthogonal equivalence and quasi-LU equivalence (a property that follows from LU equivalence and in fact implies LU equivalence in certain cases for $2$-qubit density matrices), and then apply a generalization of Specht's criterion \cite{jing2015unitary} to show that quasi-LU equivalence of bipartite quantum states reduces to checking trace identities. This provides a nice and efficient way to check for LU equivalence, however there is a minor error in establishing the LU equivalence in the case of $2$-qubit density matrices, and furthermore their methods only apply to bipartite quantum states. 

In this paper, we present a modified version of the proof in \cite{jing2016local}, utilizing a hypermatrix algebra framework to establish a correspondence between between simultaneous orthogonal equivalence and quasi-LU equivalence of bipartite quantum states, and also LU equivalence in the case of $2$-qubit density matrices. Since our proof is written in the language of hypermatrices, this allows us to generalize the correspondence between simultaneous orthogonal equivalence and quasi-LU equivalence of tripartite quantum states, and again LU equivalence in the case of $3$-qubit density matrices. We then apply a more powerful generalization of Specht's criterion \cite{futorny2017specht} to reduce the problem of determining quasi-LU equivalence of tripartite quantum states (or LU equivalence in the case of $3$-qubit density matrices) to trace identities. Indeed, it is apparent that our proofs can, with relative ease, be further generalized to arbitrary dimensional multipartite states, however, there are computational limits to the practicality of our results, hence we stop at the tripartite case. 

\section{Preliminaries}
\subsection{Hypermatrix Algebra}
Most of the concepts and terminology in this subsection can be found \cite{lim2013tensors}. Let $F$ be a field and suppose $V$ and $W$ are finite-dimensional vector spaces of dimensions $m$ and $n$, respectively. Just as a linear transformation $T:V\rightarrow W$ can be represented as a matrix $A\in F^{m\times n}$, similarly a tensor $T\in V_1\otimes...\otimes V_d$, where $\dim(V_i)=n_i$ for each $1\leq i\leq d$, can be represented as a \textbf{hypermatrix} $A\in F^{n_1\times...\times n_d}$. The positive integer $d$ is called the \textbf{order} of a hypermatrix. 

Suppose $A\in F^{n_1\times...\times n_d}$ and $B\in F^{k_1\times...\times k_e}$ are hypermatrices of order $d$ and $e$. The \textbf{outer product} of $A$ and $B$, denoted $A\circ B$, is the hypermatrix of order $d+e$ whose $(i_1,...,i_d,j_1,...,j_e)$ coordinate is given by $a_{i_1,...,i_d}b_{j_1,...,j_e}$. For a hypermatrix $A\in F^{n_1\times n_2\times...\times n_d}$ and matrices $X_1\in F^{\bullet\times n_1},...,X_d\in F^{\bullet\times n_d}$, the \textbf{multilinear matrix multiplication} of $(X_1,...,X_d)$ with $A$ is defined to be the order $d$ hypermatrix
$A' := (X_1,...,X_d)*A\in F^{n_1\times...\times n_d}$, 
such that
\[A'_{i_1i_2...i_d} = \sum\limits_{j_1,j_2,...,j_d=1}^{n_1,n_2,...,n_d}(X_1)_{i_1j_1}...(X_d)_{i_dj_d}A_{j_1j_2...j_d}.\]
Multilinear matrix multiplication is linear in terms of the matrices in both parts; that is, if $\alpha,\beta\in F$, $X_1,Y_1\in F^{m_1\times n_1}$;...; $X_d,Y_d\in F^{m_d\times n_d}$; and $A,B\in F^{n_1\times n_2\times...\times n_d}$; then 
\[(X_1,...,X_d)*(\alpha A + \beta B) = \alpha(X_1,...,X_d)*A + \beta(Y_1,...,Y_d)*B\]
and 
\[[\alpha(X_1,...,X_d)+\beta(Y_1...,Y_d)]*A = \alpha(X_1,...,X_d)*A + \beta(Y_1,...,Y_d)*B.\]
Next, $k:[d]\rightarrow \mathbb{N}$. The outer product interacts with multilinear matrix multiplication in the following way
\begin{eqnarray}\label{outerproduct - multilinmatmult - relation}
    \begin{split}
        &(X_{1_1},...,X_{1_{k(1)}},...,X_{d_1},...,X_{d_{k(d)}})*(A_1\circ...\circ A_d) \\
        &\qquad = (X_{1_1},...,X_{1_{k(1)}})*A_1\circ...\circ (X_{d_1},...,X_{d_{k(d)}})*A_d
    \end{split} 
\end{eqnarray}
where $X_{i_j}\in F^{m^{(i)}_j\times n^{(i)}_j}$ and $A_i\in F^{n^{(i)}_1\times....\times n^{(i)}_{k(i)}}$. Consequently, for an arbitrary hypermatrix $A\in F^{n_1\times...\times n_d}$, which can be expressed as
\[A = \sum\limits_{k=1}^r\alpha_k(v_1^{(k)}\circ...\circ v_d^{(k)})\]
for some $r\in \mathbb{N}$ and collection of vectors $v_i^{(k)}\in F^{n_i}$ for each $i$ and $k$, then
\begin{equation}\label{outerproduct - multilinmatmult - special case}
    (X_1,...,X_d)*A = \sum\limits_{k=1}^r\alpha_k(X_1v_1^{(k)})\circ...\circ (X_dv_d^{(k)})
\end{equation}
for any $X_1\in F^{\bullet\times n_1}$,..., $X_d\in F^{\bullet\times n_d}$.

In some cases, it is more convenient to view hypermatrices as matrices. The \textbf{k-mode unfolding} \cite{kolda2009tensor} of a hypermatrix $A\in F^{n_1\times n_2\times...\times n_d}$ is the $n_k\times (n_{k+1}...n_dn_1...n_{k-1})$ matrix, denoted $A_{(k)}$, whose $(i_k,j)$ entry is given by $(i_1,...,i_d)$-entry of $A$, with
\[j=1+\sum\limits_{\substack{l=1 \\ l\neq k}}^d\left[(i_l-1)\prod\limits_{\substack{m=1 \\ m\neq k}}^{l-1}n_m\right],\]
or in the case where the index starts at $0$, 
\[j = \sum\limits_{\substack{l=1 \\ l\neq k}}^d\left[i_l\prod\limits_{\substack{m=1 \\ m\neq k}}^{l-1}n_m\right].\]
For example, if $A = [A_{i_1i_2i_3}]\in F^{2\times 2\times 2}$, then it can be verified that $A_{(1)}$ is the $2\times 4$ matrix given by
\[A_{(1)} = \left[\begin{array}{cccc}
    A_{111} & A_{121} & A_{112} & A_{122} \\
    A_{211} & A_{221} & A_{212} & A_{222}
\end{array}\right].\]
We call the process of unfolding a hypermatrix $A$ to a matrix $A_{(k)}$ the \textbf{matricization} of $A$. 

\subsection{Cayley's Second Hyperdeterminant}
Lastly, we review the \textbf{Cayley's second hyperdeterminant}, also known as the \textbf{geometric hyperdeterminant} \cite{lim2013tensors}. Let $A \in F^{n_1\times...\times n_d}$ be an order $d$ hypermatrix. The multilinear functional associated with $A$ if defined as 
\begin{align*}
    f_A:F^{n_1\times...\times n_d}&\longrightarrow F \\
    (\mathbf{x}_1,...,\mathbf{x}_d) &\mapsto \sum\limits_{j_1,...,j_d=1}^{n_1,...,n_d}a_{j_1...j_d}\mathbf{x}_1^{j_1}...\mathbf{x}_d^{j_d}.
\end{align*}
The geometric determinant of $A$, denoted as $\mathrm{Det}(A)$, is a homogeneous polynomial in the entries of $A$ such that $\mathrm{Det}(A) = 0$ if and only if the system of multilinear equations
\[\nabla f_A = 0\]
has a nontrivial solution. The geometric hyperdeterminant exists for any hypermatrix $A\in F^{n_1\times...\times n_d}$ if 
\[n_k-1 \leq \sum\limits_{j\neq k}(n_j-1)\]
for each $k=1,...,d$; consequently, it exists for all cuboid hypermatrices $A\in F^{\overbrace{n\times...\times n}^{d\text{ times}}}$ of order $d\geq 2$. When it exists, $\mathrm{Det}(A)$ satisfies an important property stated below:
\begin{proposition}\cite{lim2013tensors}
	Let $A\in F^{n_1\times...\times n_d}$ and suppose $\mathrm{Det}(A)$ exists. Then for any $X_1\in GL(n_1)$,..., $X_d\in GL(n_d)$, we have that
    \[\mathrm{Det}((X_1,...,X_d)*A) = \det(X_1)^{m/n_1}...\det(X_d)^{m/n_d}\mathrm{Det}(A),\]
    where $m$ is the degree of $\mathrm{Det}(A)$. Consequently, $\mathrm{Det}$ is invariant under the multilinear matrix multiplication of $(X_1,...,X_d)$ with $X_i\in SL(n_i)$ (and in particular, $X_i\in SU(n_i)$) for each $i$. 
\end{proposition}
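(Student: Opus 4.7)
The plan is to combine the characterization of $\mathrm{Det}(A)$ as the defining polynomial of the locus where $\nabla f_A = 0$ has a nontrivial solution with the irreducibility of $\mathrm{Det}$ and an elementary character-theoretic argument.

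First, I would determine the transformation law for the multilinear functional $f_A$ under the $GL$-action. Setting $A' = (X_1, \ldots, X_d)*A$ and directly expanding $f_{A'}(\mathbf{x}_1, \ldots, \mathbf{x}_d)$ via the definition of multilinear matrix multiplication, a routine re-indexing yields the identity
\[
f_{(X_1,\ldots,X_d)*A}(\mathbf{x}_1,\ldots,\mathbf{x}_d) = f_A(X_1^T\mathbf{x}_1, \ldots, X_d^T\mathbf{x}_d).
\]
Because each $X_i$ is invertible, the map $(\mathbf{x}_1, \ldots, \mathbf{x}_d) \mapsto (X_1^T\mathbf{x}_1, \ldots, X_d^T\mathbf{x}_d)$ is a bijection that preserves nontriviality of tuples. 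By the chain rule this bijection carries nontrivial zeros of $\nabla f_A$ to nontrivial zeros of $\nabla f_{A'}$ and vice versa, so by the defining property of the geometric hyperdeterminant we get $\mathrm{Det}((X_1,\ldots,X_d)*A) = 0$ if and only if $\mathrm{Det}(A) = 0$.

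Next, treating the entries of $X_1, \ldots, X_d$ as parameters, both sides of the asserted equation are polynomials in the entries of $A$. Invoking the irreducibility of $\mathrm{Det}$ as a polynomial on $F^{n_1 \times \cdots \times n_d}$ (a standard fact in Gelfand--Kapranov--Zelevinsky theory), the Nullstellensatz yields
\[
\mathrm{Det}((X_1,\ldots,X_d)*A) = c(X_1,\ldots,X_d)\,\mathrm{Det}(A)
\]
for some polynomial $c(X_1,\ldots,X_d)$ in the entries of the $X_i$. The associativity $(Y_1,\ldots,Y_d)*((X_1,\ldots,X_d)*A) = (Y_1X_1,\ldots,Y_dX_d)*A$, which is immediate from the definition, upgrades $c$ to a polynomial group homomorphism $GL(n_1) \times \cdots \times GL(n_d) \to F^\times$. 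Setting all but one argument equal to the identity shows $c$ splits as a product $c(X_1,\ldots,X_d) = \prod_i \chi_i(X_i)$ where each $\chi_i$ is a polynomial character of $GL(n_i)$; since the only such characters are nonnegative powers of $\det$, this forces $\chi_i = \det^{a_i}$ for some nonnegative integer $a_i$.

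Finally, to pin down $a_i$ I would specialize $X_i = tI_{n_i}$ and $X_j = I_{n_j}$ for $j \neq i$. A direct calculation shows multiplication by $tI$ in a single mode scales every entry of $A$ by $t$, so the left-hand side equals $t^m \mathrm{Det}(A)$ by the total homogeneity of $\mathrm{Det}$, while the right-hand side equals $\det(tI_{n_i})^{a_i}\mathrm{Det}(A) = t^{n_i a_i}\mathrm{Det}(A)$; matching gives $a_i = m/n_i$. The main obstacle is the appeal to irreducibility of the geometric hyperdeterminant, which is nontrivial but is a well-known GKZ-theoretic result that can be cited from \cite{lim2013tensors}; once irreducibility is in hand, the remaining steps are structural and only require basic representation theory of $GL(n)$.
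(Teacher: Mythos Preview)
The paper does not supply its own proof of this proposition; it is simply quoted as a cited result from \cite{lim2013tensors}. There is therefore nothing to compare your argument against.

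As for correctness, your outline is the standard GKZ-style proof and is sound. Two minor points worth tightening: (i) the Nullstellensatz step implicitly assumes $F$ is algebraically closed (or that you pass to $\overline{F}$ and descend), and you should remark that the linear substitution $A\mapsto (X_1,\ldots,X_d)*A$ preserves irreducibility, so both $\mathrm{Det}(A')$ and $\mathrm{Det}(A)$ are irreducible of the same degree in the entries of $A$ and hence agree up to a unit in $F[X_{ij}]$; (ii) your final specialization simultaneously proves that $n_i\mid m$, since you already know $a_i$ is a nonnegative integer---this is worth saying explicitly so the exponent $m/n_i$ is seen to make sense. Neither point is a genuine gap.
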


\section{Quasi-LU Equivalence}
Let $\rho$ be the density matrix of a (mixed or pure) multipartite state of the quantum system $\mathbb{C}^{d_1}\otimes...\otimes\mathbb{C}^{d_n}$ and let $\{\lambda_i^k:1\leq i\leq d_k^2-1; k=1,...,n\}$ denote the set of $d_k\times d_k$ generalized Gell-Mann (GGM) matrices \cite{bertlmann2008bloch}. Then recall we can express $\rho$ as \cite{de2011multipartite}:
\begin{align*}
	&\quad \frac{1}{d_1\cdot...\cdot d_n}\bigg(\bigotimes_{k=1}^nI_{d_k}
    +\sum\limits_{j_1=1}^n\sum\limits_{\alpha_1=1}^{d_{j_1}^2-1}T_{j_1}^{\alpha_1}\lambda_{\alpha_1}^{(j_1)}+\sum\limits_{1\leq j_1<j_2\leq n}\sum\limits_{\alpha_1=1}^{d_{j_1}^2-1}\sum\limits_{\alpha_2=1}^{d_{j_2}^2-1}T_{j_1j_2}^{\alpha_1\alpha_2}\lambda_{\alpha_1}^{(j_1)}\lambda_{\alpha_2}^{(j_2)}+.... \\
	&\qquad ....+\sum\limits_{1\leq j_1<...<j_m\leq n}\sum\limits_{\alpha_1=1}^{d_{j_1}^2-1}\sum\limits_{\alpha_2=1}^{d_{j_2}^2-1}...\sum\limits_{\alpha_m=1}^{d_{j_m}^2-1}T_{j_1j_2...j_m}^{\alpha_1\alpha_2...\alpha_m}\lambda_{\alpha_1}^{(1)}\lambda_{\alpha_2}^{(2)}...\lambda_{\alpha_m}^{(j_m)}+....\\
	&\qquad ....+\sum\limits_{\alpha_1=1}^{d_1^2-1}...\sum\limits_{\alpha_n=1}^{d_n^2-1}T_{12...n}^{\alpha_1\alpha_2...\alpha_n}\lambda_{\alpha_1}^{(1)}\lambda_{\alpha_2}^{(2)}...\lambda_{\alpha_n}^{(n)}\bigg)
\end{align*}
where 
$\lambda_{\alpha_i}^{(j_k)} = I_{d_1}\otimes...\otimes\lambda_{\alpha_i}^{j_k}\otimes...\otimes I_{d_n}$ (i.e. $\lambda_{\alpha_i}^{j_k}$ is the $j_k^{th}$ factor in the tensor product) and 
\[T_{j_1j_2...j_m}^{\alpha_1\alpha_2...\alpha_m}=\text{tr}(\rho\sigma_{\alpha_1}^{(j_1)}\lambda_{\alpha_2}^{(j_2)}...\lambda_{\alpha_m}^{(j_m)}),\qquad m\leq n,\]
are all real coefficients. In particular,
\[T_{j_1} = \left[T_{j_1}^1,...,T_{j_1}^{d_{j_1}^2-1}\right]^t\] are real-valued vectors, 
\[T_{j_1j_2} = \bigg[T_{j_1j_2}^{\alpha_1\alpha_2}\bigg]_{(d_{j_1}^2-1)\times (d_{j_2}^2-1)}\]
are real-valued matrices, and in general, 
\[T_{j_1j_2...j_m} = \bigg[T_{j_1j_2...j_m}^{\alpha_1\alpha_2...\alpha_m}\bigg]_{(d_{j_1}^2-1)\times...\times (d_{j_m}^2-1)}\]
are real-valued hypermatrices of order $m$. These vectors, matrices, and hypermatrices uniquely define a multipartite quantum state, and so we call the set 
\[\{T_{j_1...j_m}:1\leq j_1<...<j_m\leq n; 1\leq m\leq n\}\] 
the \textbf{hypermatrix representation} of $\rho$. Also, for convenience, from now on we will denote $d_k^2-1$ as $\delta_k$. 

Let $\rho$ and $\widehat{\rho}$ be density matrices of multipartite states of the quantum system $\mathbb{C}^{d_1}\otimes...\otimes\mathbb{C}^{d_n}$. Recall that $\rho$ and $\widehat{\rho}$ are \textbf{locally unitary (LU) equivalent} if there exist $U_i\in SU(d_i)$, $1\leq i\leq n$ such that 
\begin{equation}\label{LU equivalence defn}
    \widehat{\rho} = (U_1\otimes...\otimes U_n)\rho(U_1\otimes...\otimes U_n)^{\dagger}.
\end{equation}
Now, given a $d_k\times d_k$ GGM matrix $\lambda_i^k$, we have that 
\begin{equation}\label{unitary-to-orthogonal}
	U_k\lambda_i^kU_k^{\dagger} = \sum\limits_{j=1}^{\delta_k}X_{ij}\lambda_j^k
\end{equation}
for some matrix $X = [X_{ij}]_{\delta_k\times \delta_k}$. Since the GGM matrices are Hermitian, 
\[U_k\lambda_i^kU_k^{\dagger} = U_k(\lambda_i^k)^{\dagger}U_k^{\dagger} = (U_k\lambda_i^kU_k^{\dagger})^{\dagger},\]
from which it follows that the coefficients $x_{ij}$ are real numbers. Moreover, 
\[\Tr((U_k\lambda_i^kU_k^{\dagger})(U_k\lambda_j^kU_k^{\dagger})) = \Tr(\lambda_i^k\lambda_j^kU^{\dagger}U) = \Tr(\lambda_i^k\lambda_j^k) = \delta_{ij},\]
where $\delta_{ij}$ denotes the Kronecker delta and the last equality follows from the fact that the GGM matrices are orthogonal with respect to the Hilbert-Schmidt inner product. Thus, it follows that the matrix $X$ is orthogonal; denote it as $\widetilde{O}_k$. Then from the linearity of the Kronecker product and the identity below:
\[(A_1\otimes A_2\otimes...\otimes A_n)(B_1\otimes B_2\otimes...\otimes B_n) = (A_1B_1)\otimes (A_2B_2)\otimes...\otimes (A_nB_n),\]
for each $m$ we have that 
\begin{align*}
	&(U_1\otimes...\otimes U_n)\left(\sum\limits_{1\leq j_1<...<j_m\leq n}\sum\limits_{\alpha_1,...,\alpha_m=1}^{\delta_{j_1},...,\delta_{j_m}}T_{j_1...j_m}^{\alpha_1...\alpha_m}\lambda_{\alpha_1}^{(1)}...\lambda_{\alpha_m}^{(j_m)}\right)(U_1\otimes...\otimes U_n)^{\dagger} \\
	&= \sum\limits_{1\leq j_1<...<j_m\leq n}\sum\limits_{\alpha_1,...,\alpha_m=1}^{\delta_{j_1},...,\delta_{j_m}}T_{j_1...j_m}^{\alpha_1...\alpha_m}(U_1\otimes...\otimes U_n)\lambda_{\alpha_1}^{(1)}...\lambda_{\alpha_m}^{(j_m)}(U_1\otimes...\otimes U_n)^{\dagger} \\
	&= \sum\limits_{1\leq j_1<...<j_m\leq n}\sum\limits_{\alpha_1,...,\alpha_m=1}^{\delta_{j_1},...,\delta_{j_m}}T_{j_1...j_m}^{\alpha_1...\alpha_m}(U_{j_1}\lambda_{\alpha_1}U_{j_1}^{\dagger})^{(j_1)}...(U_{j_m}\lambda_{\alpha_m}U_{j_m}^{\dagger})^{(j_m)} \\
	&= \sum\limits_{1\leq j_1<...<j_m\leq n}\sum\limits_{\alpha_1,...,\alpha_m=1}^{\delta_{j_1},...,\delta_{j_m}}T_{j_1...j_m}^{\alpha_1...\alpha_m}\left(\sum\limits_{k_1=1}^{\delta_{j_1}}(\widetilde{O}_{j_1})_{\alpha_1k_1}\lambda_{k_1}\right)^{(j_1)}...\left(\sum\limits_{k_m=1}^{\delta_{j_m}}(\widetilde{O}_{j_m})_{\alpha_mk_m}\lambda_{k_m}\right)^{(j_m)} \\
    &\qquad \text{by equation }\eqref{unitary-to-orthogonal} \\
	&= \sum\limits_{1\leq j_1<...<j_\leq n}\left(\sum\limits_{k_1,...,k_m=1}^{\delta_{j_1},...,\delta_{j_m}}\right)\left(\sum\limits_{\alpha_1,...,\alpha_m=1}^{\delta_{j_1},...,\delta_{j_m}}(\widetilde{O}_{j_1}^t)_{k_1\alpha_1}...(\widetilde{O}_{j_m}^t)_{k_m\alpha_m}T_{j_1...j_m}^{\alpha_1...\alpha_m}\right)\lambda_{k_1}^{(1)}...\lambda_{k_m}^{(j_m)} \\
	&= \sum\limits_{1\leq j_1<...<j_m\leq n}\left(\sum\limits_{k_1,...,k_m=1}^{\delta_{j_1},...,\delta_{j_m}}\right)\left((O_{j_1},...,O_{j_m})*T_{j_1...j_m}\right)^{k_1...k_m}\lambda_{k_1}^{(j_1)}...\lambda_{k_m}^{(j_m)} \\
    &\qquad \text{setting }O_{j_i} := \widetilde{O}_{j_i}^t.
\end{align*}
Thus, we say that $\rho$ and $\widehat{\rho}$ are \textbf{quasi-LU equivalent} if there exists $O_{j_1}\in O(\delta_{j_1}),...,O_{j_m}\in O(\delta_{j_m})$ such that 
\begin{equation}\label{quasi-LU equivalence defn}
    \widehat{T}_{j_1...j_m} = (O_{j_1},...,O_{j_m})*T_{j_1...j_m}\qquad (1\leq j_1<...<j_m\leq n)
\end{equation}
for each $1\leq m\leq n$. As shown above, LU equivalence implies quasi-LU equivalence. In the case of density matrices of $n$-qubits, \eqref{unitary-to-orthogonal} defines a surjective map \[SU(2)\rightarrow SO(3),\] and so in this case if $O_1,...,O_n$ are special orthogonal, then the converse holds as well (i.e. quasi-LU equivalence implies LU equivalence).

\section{Bipartite States}
In this section, we reformulate the statements and results in \cite{jing2016local} in terms of hypermatrix algebra. Note that the Fano form of a density matrix $\rho$ of a bipartite state of the quantum system $\mathbb{C}^{d_1}\otimes \mathbb{C}^{d_2}$ is given by 
\[\frac{1}{d_1d_2}\left(I_{d_1}\otimes I_{d_2}+\sum\limits_{i=1}^{\delta_1}T_1^i\lambda_i^1\otimes I_{d_2}+\sum\limits_{i=1}^{\delta_2}T_2^iI_{d_1}\otimes \lambda_i^2+\sum\limits_{i,j=1}^{\delta_1,\delta_2}T_{12}^{ij}\lambda_i^1\otimes \lambda_j^2\right),\]
and so its matrix representation is given by $\{T_1,T_2,T_{12}\}$. We say that the matrix representations $\{T_1,T_2,T_{12}\}$ and $\{\widehat{T}_1,\widehat{T}_2,\widehat{T}_{12}\}$ of $\rho$ and $\widehat{\rho}$ (respectively) are \textbf{simultaneously orthogonal (SO) equivalent} if there exists orthogonal matrices $O_1\in O(\delta_1),O_2\in O(\delta_2)$ such that 
\begin{equation}\label{SO equivalence - bipartite}
    \widehat{T}_{12} = (O_1,O_2)*T_{12}\quad \text{and}\quad \widehat{T}_1\circ \widehat{T}_2 = (O_1,O_2)*(T_1\circ T_2).
\end{equation}

In \cite{jing2016local}, it is shown that for bipartite states, quasi-LU equivalence is equivalent to SO equivalence (plus a norm assumption). However, in the special case of density matrices of $2$-qubit density matrices, it is not assumed that $O_1,O_2\in SO(3)$, only that they are in $O(3)$, in which case we cannot guarantee LU equivalence from quasi-LU equivalence. In the following, we present a modified proof of their result, showing that, in fact, we can guarantee LU equivalence from quasi-LU equivalence in the case of density matrices of $2$-qubits. We just need to add two more LU invariants to our assumption to obtain this guarantee! 

\begin{theorem}[quasi-LU Equivalence and SO Equivalence: Bipartite States]
	Let $\rho$ and $\widehat{\rho}$ be density matrices of bipartite states of the quantum system $\mathbb{C}^{d_1}\otimes \mathbb{C}^{d_2}$ with matrix representations $\{T_1,T_2,T_{12}\}$ and $\{\widehat{T}_1,\widehat{T}_2,\widehat{T}_{12}\}$ (respectively), and assume that $T_1,T_2,T_{12}\neq 0$. Then $\rho$ and $\widehat{\rho}$ are quasi-LU equivalent if and only if they are SO equivalent, and $\|\widehat{T}_i\| = \|T_i\|$ for some $i\in \{1,2\}$. 
    
    Furthermore, in the case where $\rho$ and $\widehat{\rho}$ are $2$-qubit density matrices, either conditions plus the assumption that 
    \[\mathrm{Det}(\widehat{T}_i\circ \widehat{T}_{12}) = \mathrm{Det}(T_i\circ T_{12})\neq 0\qquad (i=1,2)\]
    are equivalent to LU equivalence. 
\end{theorem}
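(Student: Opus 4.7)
The plan is to split the statement into three implications: quasi-LU $\Rightarrow$ (SO $+$ norm), the converse, and the $2$-qubit LU characterisation. The first is a direct computation. From $\widehat{T}_i = O_i T_i$ and $\widehat{T}_{12} = (O_1,O_2)*T_{12}$, the outer-product / multilinear-matrix-multiplication compatibility \eqref{outerproduct - multilinmatmult - special case} immediately yields $\widehat T_1\circ\widehat T_2 = (O_1T_1)\circ(O_2T_2) = (O_1,O_2)*(T_1\circ T_2)$, giving SO equivalence, and $\|\widehat{T}_i\| = \|O_iT_i\| = \|T_i\|$ for both $i$ follows from the orthogonality of the $O_i$.

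For the converse, the main issue is extracting the individual rank-one factorizations from the SO equality. Rewriting $(O_1,O_2)*(T_1\circ T_2) = (O_1T_1)\circ(O_2T_2)$, the SO hypothesis becomes the equality of rank-one tensors $\widehat T_1\circ \widehat T_2 = (O_1T_1)\circ(O_2T_2)$. Since $T_1,T_2\neq 0$ and the $O_i$ are invertible, all four vectors involved are nonzero, and the uniqueness of a rank-one decomposition up to reciprocal scaling forces $\widehat T_1 = \alpha O_1T_1$ and $\widehat T_2 = \alpha^{-1}O_2T_2$ for some nonzero real $\alpha$. The norm hypothesis $\|\widehat T_i\| = \|T_i\|$ for some $i$ then pins down $|\alpha|=1$, leaving the two cases $\alpha=\pm 1$. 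The case $\alpha=1$ gives quasi-LU equivalence via the same $(O_1,O_2)$. The case $\alpha=-1$ is absorbed by replacing $(O_1,O_2)$ with $(-O_1,-O_2)$: the two sign flips cancel in the bilinear action on $T_{12}$ by linearity of multilinear matrix multiplication in each factor, preserving the SO equality on $T_{12}$ while correcting the signs on $\widehat T_1$ and $\widehat T_2$.

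For the $2$-qubit LU characterisation, LU $\Rightarrow$ quasi-LU is the derivation already recorded before the theorem. The converse relies on the surjection $SU(2)\to SO(3)$ from \eqref{unitary-to-orthogonal}: any pair in $SO(3)\times SO(3)$ realising quasi-LU equivalence lifts to a pair in $SU(2)\times SU(2)$ realising LU equivalence. The real content is therefore to show that, under the determinant hypothesis, the orthogonal matrices obtained from the first part can be chosen in $SO(3)\times SO(3)$. Applying \eqref{outerproduct - multilinmatmult - relation} to split off the order-$1$ and order-$2$ factors gives $\widehat T_i\circ\widehat T_{12} = (O_i,O_1,O_2)*(T_i\circ T_{12})$ for $i=1,2$, and Proposition~1 then yields
\[\mathrm{Det}(\widehat T_i\circ\widehat T_{12}) = \det(O_1)^{a_i}\det(O_2)^{b_i}\mathrm{Det}(T_i\circ T_{12})\]
with explicit exponents $a_i,b_i$ read off the degree of the hyperdeterminant in $3\times 3\times 3$ format. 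Combining the two equalities for $i=1,2$ with $\det(O_k)\in\{\pm 1\}$ and the hypothesis that the hyperdeterminants agree and are nonzero should eliminate every case except $\det(O_1)=\det(O_2)=1$, at which point the $SU(2)\to SO(3)$ lift completes the argument.

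The main obstacle I expect is this last sign-bookkeeping step: one must verify that the two hyperdeterminant equalities, with the specific exponents dictated by Proposition~1 in the $3\times 3\times 3$ setting, really do rule out every off-diagonal sign assignment rather than permitting a compensating pair such as $(\det O_1,\det O_2)=(-1,-1)$ or $(\pm 1,\mp 1)$. The rank-one factorization step for the bipartite equivalence and the $SU(2)$ lift at the end are comparatively routine once the sign obstruction has been eliminated.
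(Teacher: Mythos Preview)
Your approach matches the paper's throughout. The rank-one-uniqueness argument you give for the converse is precisely what the paper's chain of equalities \eqref{T_1T_2 line1}--\eqref{T_2 - bipartite case} computes in coordinates; your phrasing is simply the conceptual summary of that computation, and the $\alpha=\pm1$ endgame (absorbing $\alpha=-1$ into $(-O_1,-O_2)$) is identical.

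The obstacle you flag in the $2$-qubit step dissolves once you track which orthogonal matrix appears where. In $\widehat T_i\circ\widehat T_{12} = (O_i,O_1,O_2)*(T_i\circ T_{12})$ the matrix $O_i$ occurs \emph{twice} in the triple (once acting on $T_i$ and once as a factor acting on $T_{12}$), while $O_{\bar i}$ with $\bar i\in\{1,2\}\setminus\{i\}$ occurs exactly once. The paper therefore reads off from Proposition~1 the determinant factor $\det(O_i)^2\det(O_{\bar i})$; since $\det(O_i)^2=1$ automatically, the hypothesis $\mathrm{Det}(\widehat T_i\circ\widehat T_{12})=\mathrm{Det}(T_i\circ T_{12})\neq 0$ forces $\det(O_{\bar i})=1$ directly. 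The $i=1$ equation thus constrains only $\det(O_2)$ and the $i=2$ equation only $\det(O_1)$: the two constraints are decoupled, so the compensating sign pairs $(-1,-1)$ and $(\pm1,\mp1)$ you were worried about are ruled out without any further case analysis.
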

\begin{proof}
	$\rho$ and $\widehat{\rho}$ are quasi-LU equivalent if and only if there exists orthogonal matrices $O_1\in O(\delta_1),O_2\in O(\delta_2)$ such that 
    \begin{align*}
        \widehat{T}_1 &= O_1*T_1 = O_1T_1, \\
        \widehat{T}_2 &= O_2*T_2 = O_2T_2, \quad\text{and} \\
        \widehat{T}_{12} &= (O_1,O_2)*T_{12} = O_1T_{12}O_2^t.
    \end{align*}
	Assuming quasi-LU equivalence, we immediately have $\widehat{T}_{12} = (O_1,O_2)*T_{12}$. Moreover, 
    \begin{align*}
        \widehat{T}_1\circ \widehat{T}_2 &= O_1T_1\circ O_2T_2 \\
        &= O_1T_1(O_2T_2)^t \\
        &= O_1(T_1T_2^t)O_2^t \\
        &= (O_1,O_2)*T_1\circ T_2,
    \end{align*}
	and
    \[\|\widehat{T}_i\| = \|O_i*T_i\| = \|O_iT_i\| = \|T_i\|\]
	for $i=1,2$. Thus, forward implication is proven.
	
	Conversely, suppose $\rho$ and $\widehat{\rho}$ are SO equivalent and without loss of generality assume that $\|\widehat{T}_1\|=\|T_1\|$. First, note that $T_1,T_2\neq 0$ implies that $\widehat{T}_1,\widehat{T}_2\neq 0$. Therefore, 
	\begin{align}
		0 &< \|\widehat{T}_1\|^2\|\widehat{T}_2\|^2 \label{T_1T_2 line1} \\
        &= \widehat{T}_1^t\widehat{T}_1\widehat{T}_2^t\widehat{T}_2 \label{T_1T_2 line2} \\
        &= \widehat{T}_1^t(\widehat{T}_1\circ \widehat{T}_2)\widehat{T}_2 \\
        &= \widehat{T}_1^t\big((O_1,O_2)*(T_1\circ T_2)\big)\widehat{T}_2,\quad \text{by assumption} \\
        &= \widehat{T}_1^t\big((O_1*T_1)\circ (O_2*T_2)\big)\widehat{T}_2,\quad \text{by equation \eqref{outerproduct - multilinmatmult - special case}} \\
        &= \widehat{T}_1^t\big((O_1T_1)\circ (O_2T_2)\big)\widehat{T}_2 \\
        &= \widehat{T}_1^t\big(O_1T_1(O_2T_2)^t\big)\widehat{T}_2 \\
        &= \widehat{T}_1^tO_1T_1T_2^tO_2^t\widehat{T}_2 \label{T_1T_2 line8}
	\end{align}
    Note in particular that this shows that 
    \begin{equation}\label{T_1T_2 outer product}
        \widehat{T}_1\circ \widehat{T}_2 = O_1T_1T_2^tO_2^t.
    \end{equation}
    Note also that $\widehat{T}_1^tO_1T_1$ and $T_2^tO_2^t\widehat{T}_2$ are just numbers, and so by lines \eqref{T_1T_2 line1}, \eqref{T_1T_2 line2}, and \eqref{T_1T_2 line8} we have that 
    \begin{equation}\label{alpha - bipartite case}
        \alpha:=\frac{\widehat{T}_1^tO_1T_1}{\widehat{T}_1^t\widehat{T}_1} = \frac{\widehat{T}_2^t\widehat{T}_2}{T_2^tO_2^t\widehat{T}_2} \neq 0.
    \end{equation}
	Therefore by equations \eqref{T_1T_2 outer product} and \eqref{alpha - bipartite case} we have that 
    \[\widehat{T}_1\widehat{T}_2\widehat{T}_2 = O_1T_1T_2O_2^t\widehat{T}_2\]
    which implies that 
    \[\widehat{T}_1\cdot\frac{\widehat{T}_2^t\widehat{T}_2}{T_2^tO_2^t\widehat{T}_2} = O_1T_1\]
    or in other words
    \begin{equation}\label{T_1 - bipartite case}
        \alpha\widehat{T}_1 = O_1T_1.
    \end{equation}
    Similarly, 
    \[\widehat{T}_1^t\widehat{T}_1\widehat{T}_2^t = \widehat{T}_1^tO_1T_1T_2^tO_2^t\]
    which implies that
    \[\frac{\widehat{T}_1^t\widehat{T}_1}{\widehat{T}_1^tO_1T_1}\cdot \widehat{T}_2^t = T_2^tO_2^t\]
    or in other words
    \[\alpha^{-1}\widehat{T}_2^t = T_2^tO_2^t\]
    and hence
    \begin{equation}\label{T_2 - bipartite case}
        \alpha^{-1}\widehat{T}_2 = O_2T_2.
    \end{equation}
    Now, by assumption $\|\widehat{T}_i\|=\|T_i\|$ for at least one $i\in \{1,2\}$, and in either case equation \eqref{T_1 - bipartite case} or equation \eqref{T_2 - bipartite case} imply that $|\alpha|=1$. Thus, $\alpha_i=\pm 1$ since $T_i$ and $T_{12}$ are real. If $\alpha=1$, then we have that
    \begin{align*}
        \widehat{T}_1 &= O_1T_1 \\
        \widehat{T}_2 &= O_2T_2 \\
        \widehat{T}_{12} &= O_1T_{12}O_2^t;
    \end{align*}
    on the other hand if $\alpha=-1$, then we have that
    \begin{align*}
        \widehat{T}_1 &= \overline{O}_1T_1 \\
        \widehat{T}_2 &= \overline{O}_2T_2 \\
        \widehat{T}_{12} &= \overline{O}_1T_{12}\overline{O}_2^t
    \end{align*}
    where $\overline{O}_i = -O_i\in O(\delta_i)$. In either case, quasi-LU equivalence is established. 

	Having established a correspondence between quasi-LU equivalence and SO equivalence, we know look at the special case where $\rho$ and $\widehat{\rho}$ are $2$-qubit density matrices. In particular, if we additionally assume that 
    \begin{equation}\label{Det - bipartite case}
        \mathrm{Det}(\widehat{T}_i\circ \widehat{T}_{12}) = \mathrm{Det}(T_i\circ T_{12})\qquad (i=1,2)
    \end{equation}
    then by equation \eqref{outerproduct - multilinmatmult - relation} and Proposition 1 it follows that equation \eqref{Det - bipartite case} reduces to
    \[\det(O_i)^2\det(O_{\overline{i}})\mathrm{Det}(T_i\circ T_{12}) = \mathrm{Det}(T_i,T_{12})\qquad (i\in \{1,2\})\]
    with $\overline{i} \in \{1,2\}\setminus \{i\}$, implying that $\det(O_{\overline{i}})$ must be positive. Thus, both $O_1,O_2\in SO(3)$, in which case quasi-LU equivalence implies LU equivalence, proving that $\rho$ and $\widehat{\rho}$ are LU equivalent.
\end{proof}
\subsection{SO Equivalence and Trace Identities: Bipartite States}
In \cite[Theorem 3.4]{jing2015unitary}, Jing generalizes Specht's criterion \cite{specht1940theorie,pearcy1962complete}, showing that for any two sets $\{A_i\}$ and $\{B_i\}$ of real $m\times n$ matrices, the following are equivalent:
\begin{itemize}
	\item $\{A_i\}$ is simultaneously orthogonal equivalent to $\{B_i\}$ (i.e. for each $i$, $B_i=OA_iP^t$ for some orthogonal matrices $O$ and $P$);
	\item $\{A_iA_i^t:i\leq j\}$ is simultaneously orthogonal similar to $\{B_iB_j^t:i\leq j\}$ (i.e. for each $i$ and $j$, $B_iB_j = OA_iA_jO^t$ for some orthogonal matrix $O$);
	\item $\Tr(w\{A_iA_j^t\}) = \Tr(w\{B_iB_j^t\})$ for any word $w$ in respective alphabets.
\end{itemize}
In general, it is difficult to show that two states are quasi-LU equivalent because we do not a priori know anything about the orthogonal matrices that relate the two states. However, Jing's Theorem 3.4 and our Theorem 1 imply the following:
\begin{theorem}[Characterization of quasi-LU Equivalence: Bipartite States] 
	Suppose $\rho$ and $\widehat{\rho}$ are two density matrices of bipartite states of the quantum system $\mathbb{C}^{d_1}\otimes \mathbb{C}^{d_2}$ with respective matrix representations $\{T_1,T_2,T_{12}\}$ and $\{\widehat{T}_1,\widehat{T}_2,\widehat{T}_{12}\}$, and assume that $T_1,T_2,T_{12}\neq 0$. Let $\{A_1,A_2\} = \{T_1\circ T_2,T_{12}\}$ and $\{B_1,B_2\} = \{\widehat{T}_1\circ \widehat{T}_2,\widehat{T}_{12}\}$. Then $\rho$ and $\widehat{\rho}$ are quasi-LU equivalent if and only if $\|T_i\| = \|\widehat{T}_i\|$ for at least one $i$ and 
    \begin{equation}\label{trace identities - bipartite case}
        \Tr(w\{A_iA_j^t\}) = \Tr(w\{B_iB_j^t\})\qquad (1\leq i\leq j\leq 2)
    \end{equation}
	for any word $w$. 
    
    Furthermore, in the case where $\rho$ and $\widehat{\rho}$ are density matrices of $2$-qubits, both conditions plus the assumption that 
    \[\mathrm{Det}(\widehat{T}_i\circ \widehat{T}_{12}) = \mathrm{Det}(T_i,T_{12})\qquad (i=1,2)\]
    are equivalent to LU equivalence. 
\end{theorem}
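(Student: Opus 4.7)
The plan is to combine Theorem 1 with the generalized Specht criterion of Jing recalled just before the statement. The bridge between the two is the observation that, since $T_1\circ T_2 = T_1 T_2^t$ and $(O_1,O_2)*M = O_1 M O_2^t$ for a matrix $M$, the SO-equivalence relations in equation \eqref{SO equivalence - bipartite} are precisely the assertion that the matrix pairs $\{A_1,A_2\}$ and $\{B_1,B_2\}$ are simultaneously orthogonally equivalent in the sense of the first bullet point of Jing's theorem.

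For the forward direction, I would start from quasi-LU equivalence, invoke Theorem 1 to extract SO equivalence together with the norm equality $\|T_i\|=\|\widehat{T}_i\|$ for some $i$, rewrite the SO relations in the matrix form $B_j = O_1 A_j O_2^t$ ($j=1,2$), and then apply the third bullet of Jing's theorem to conclude the trace identities \eqref{trace identities - bipartite case}.

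For the reverse direction, the trace identities together with Jing's theorem produce orthogonal matrices $O_1\in O(\delta_1)$ and $O_2\in O(\delta_2)$ satisfying $B_j = O_1 A_j O_2^t$ for $j=1,2$. Unwinding the outer products recovers the SO-equivalence relations \eqref{SO equivalence - bipartite}, and together with the norm hypothesis this is precisely the input needed for the reverse direction of Theorem 1, which returns quasi-LU equivalence.

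For the $2$-qubit refinement, I would reuse verbatim the hyperdeterminant computation from the last paragraph of the proof of Theorem 1: equation \eqref{outerproduct - multilinmatmult - relation} gives $\widehat{T}_i\circ \widehat{T}_{12} = (O_i,O_1,O_2)*(T_i\circ T_{12})$, Proposition 1 then expresses $\mathrm{Det}(\widehat{T}_i\circ \widehat{T}_{12})$ as a prescribed product of powers of $\det(O_1)$ and $\det(O_2)$ times the nonzero quantity $\mathrm{Det}(T_i\circ T_{12})$, and letting $i$ range over $\{1,2\}$ forces $\det(O_1)=\det(O_2)=1$. Hence $O_1,O_2\in SO(3)$, and the surjection $SU(2)\twoheadrightarrow SO(3)$ promotes quasi-LU equivalence to LU equivalence as recorded in Section 3. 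The only real obstacle is bookkeeping: one must verify that the orthogonal pair produced by Specht's criterion is the same pair fed into both Theorem 1 and the Det step. This is automatic because both arguments refer to the existentially quantified pair $(O_1,O_2)$ extracted once from the matrix equations $B_j=O_1 A_j O_2^t$, so no genuine technical difficulty arises beyond what has already been handled in Theorem 1.
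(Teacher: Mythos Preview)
Your proposal is correct and follows essentially the same approach as the paper: the paper treats Theorem 2 as an immediate corollary, stating only that ``Jing's Theorem 3.4 and our Theorem 1 imply the following,'' and your write-up is a faithful fleshing-out of exactly that combination, including the reuse of the hyperdeterminant argument from Theorem 1 for the $2$-qubit refinement.
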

Thus, the problem of quasi-LU equivalence, or LU equivalence in the case where $\rho$ and $\widehat{\rho}$ are $2$-qubit density matrices, reduces to checking trace identities and a few other LU invariants. This is nice because in addition to being easy to compute, each trace identity is also an LU invariant, and so if any do not hold then we can conclude that $\rho$ and $\widehat{\rho}$ are not LU/quasi-LU equivalent; moreover, it is sufficient to check the trace identities for words of length at most $16(\delta_1+\delta_2)^2$ (see appendix or \cite{futorny2017specht}).  

\section{Tripartite States}
In this section, we will generalize Theorems 1 and 2 to tripartite quantum states of arbitrary dimension. This is done by introducing a hypermatrix algebra framework for quasi-LU and SO equivalence similar to how we presented them in the previous section, and then by applying a broader generalization of Specht's criterion \cite[Corollary 3.]{futorny2017specht}. 

First, note that the Fano form of a density matrix $\rho$ of a tripartite state of the quantum system $\mathbb{C}^{d_1}\otimes \mathbb{C}^{d_2}\otimes \mathbb{C}^{d_3}$ is given by
\begin{align*}
    &\quad\frac{1}{d_1d_2d_3}\bigg(I_{d_1}\otimes I_{d_2}\otimes I_{d_3} + \sum\limits_{i=1}^{\delta_1}T_1^i\lambda_i^1\otimes I_{d_2}\otimes I_{d_3}+\sum\limits_{i=1}^{\delta_2}T_2^iI_{d_1}\otimes \lambda_i^2\otimes I_{d_3} + \\
    &\qquad \sum\limits_{i=1}^{\delta_3}T_3^iI_{d_1}\otimes I_{d_2}\otimes \lambda_i^3 + \sum\limits_{i,j=1}^{\delta_1,\delta_2}T_{12}^{ij}\lambda_i^1\otimes \lambda_j^2\otimes I_{d_3} + \sum\limits_{i,j=1}^{\delta_1,\delta_3}T_{13}^{ij}\lambda_i^1\otimes I_{d_2}\otimes \lambda_j^3 + \\
    &\qquad\qquad \sum\limits_{i,j=1}^{\delta_2,\delta_3}T_{23}^{ij}I_{d_1}\otimes \lambda_i^2\otimes \lambda_j^3 + \sum\limits_{i,j,k=1}^{\delta_1,\delta_2,\delta_3}T_{123}^{ijk}\lambda_i^1\otimes \lambda_j^2\otimes \lambda_k^3\bigg)
\end{align*}
and so its hypermatrix representation is given by $\{T_1,T_2,T_3,T_{12},T_{13},T_{23},T_{123}\}$. For convenience set $T:=T_{123}$ and denote the $(i,j,k)$ entry of $T$ as $t_{ijk}$. Then by unfolding $T$, we may view it as any one of the following three matrices
\[T_{(1)} := \left[\begin{array}{ccccccccccccc}
	t_{111} & t_{121} & ... & t_{1\delta_21} & t_{112} & t_{122} & ... & t_{1\delta_22} & t_{113} & t_{123 }& ... & ... & t_{1\delta_2\delta_3} \\
	t_{211} & t_{221} & ... & t_{1\delta_21} & t_{212} & t_{222} & ... & t_{2\delta_22} & t_{213} & t_{223} & ... & ... & t_{2\delta_2\delta_3} \\
	\vdots & \vdots & \ddots & \vdots & \vdots & \vdots & \ddots & \vdots & \vdots & \vdots & \ddots & \ddots & \vdots \\
	t_{\delta_111} & t_{\delta_121} & ... & t_{\delta_1\delta_11} & t_{\delta_112} & t_{\delta_122} & ... & t_{\delta_1\delta_22} & t_{\delta_113} & t_{\delta_123} & ... & ... & t_{\delta_1\delta_2\delta_3}
\end{array}\right]_{\delta_1\times \delta_2\delta_3},\]
\[T_{(2)} := \left[\begin{array}{ccccccccccccc}
	t_{111} & t_{211} & ... & t_{\delta_111} & t_{112} & t_{212} & ... & t_{\delta_112} & t_{113} & t_{213} & ... & ... & t_{\delta_11\delta_3} \\
	t_{121} & t_{221} & ... & t_{\delta_121} & t_{122} & t_{222} & ... & t_{\delta_122} & t_{123} & t_{223} & ... & ... & t_{\delta_12\delta_3} \\
	\vdots & \vdots & \ddots & \vdots & \vdots & \vdots & \ddots & \vdots & \vdots & \vdots & \ddots & \ddots & \vdots \\
	t_{1\delta_21} & t_{2\delta_21} & ... & t_{\delta_1\delta_21} & t_{1\delta_22} & t_{2\delta_22} & ... & t_{\delta_1\delta_22} & t_{1\delta_23} & t_{2\delta_23} & ... & ... & t_{\delta_1\delta_2\delta_3} \\
\end{array}\right]_{\delta_2\times \delta_1\delta_3},\]
\[T_{(3)} := \left[\begin{array}{ccccccccccccc}
	t_{111} & t_{211} & ... & t_{\delta_111} & t_{121} & t_{221} & ... & t_{\delta_121} & t_{131} & t_{231} & ... & ... & t_{\delta_1\delta_21} \\
	t_{112} & t_{212} & ... & t_{\delta_112} & t_{122} & t_{222} & ... & t_{\delta_122} & t_{132} & t_{232} & ... & ... & t_{\delta_1\delta_22} \\
	\vdots & \vdots & \ddots & \vdots & \vdots & \vdots & \ddots & \vdots & \vdots & \vdots & \ddots & \ddots & \vdots \\
	t_{11\delta_3} & t_{21\delta_3} & ... & t_{\delta_11\delta_3} & t_{12\delta_3} & t_{22\delta_3} & ... & t_{\delta_12\delta_3} & t_{13\delta_3} & t_{23\delta_3} & ... & ... & t_{\delta_1\delta_2\delta_3}
\end{array}\right]_{\delta_3\times \delta_1\delta_2}.\]
Given $\delta_i\times \delta_i$ matrices $X_i$ ($1\leq i\leq 3$), direct computation yields
\begin{equation}\label{matricization - formula1}
    ((X_1,X_2,X_3)*T)_{(1)} = X_1T_{(1)}(X_3\otimes X_2)^t
\end{equation}
\begin{equation}\label{matricization - formula2}
    ((X_1,X_2,X_3)*T)_{(2)} = X_2T_{(2)}(X_3\otimes X_1)^t
\end{equation}
\begin{equation}\label{matricization - formula3}
    ((X_1,X_2,X_3)*T)_{(3)} = X_3T_{(3)}(X_2\otimes X_1)^t.
\end{equation}
Furthermore, if $v\in \mathbb{R}^{\delta_1}$ and $M\in \mathbb{R}^{\delta_2\delta_3}$, then direct computation yields
\begin{equation}\label{vectorization - formula1}
    (v\circ M)_{(1)} = v\circ \mathrm{vec}(M), 
\end{equation}
\begin{equation}\label{vectorization - formula2}
    (M\circ v)_{(3)}^t = \mathrm{vec}(M)\circ v,
\end{equation}
and 
\begin{equation}\label{vectorization - formula3}
    \mathrm{vec}((X_1,X_2)*M) = (X_2\otimes X_1)\mathrm{vec}(M).
\end{equation}

Now, if $\rho$ and $\widehat{\rho}$ are density matrices of tripartite states of the quantum system $\mathbb{C}^{d_1}\otimes \mathbb{C}^{d_2}\otimes \mathbb{C}^{d_3}$ with hypermatrix representations $\{T_1,T_2,T_3,T_{12},T_{13},T_{23},T_{123}\}$ and $\{\widehat{T}_1,\widehat{T}_2,\widehat{T}_3,\widehat{T}_{12},\widehat{T}_{13},\widehat{T}_{23},\widehat{T}_{123}\}$ (respectively), then we say they are \textbf{simultaneously orthogonal (SO) equivalent} if there exists $O_i\in O(\delta_i)$ such that 
\begin{equation}\label{SO Equivalence - Tripartite}
    \begin{split}
	\widehat{T}_{123} &= (O_1,O_2,O_3)*T_{123} \\
    \widehat{T}_1\circ \widehat{T}_{23} &= (O_1,O_2,O_3)*(T_1\circ T_{23}) \\
	\widehat{T}_2\circ \widehat{T}_{13} &= (O_2,O_1,O_3)*(T_2\circ T_{13}) \\
    \widehat{T}_{12}\circ \widehat{T}_3 &= (O_1,O_2,O_3)*(T_{12}\circ T_3).
    \end{split}
\end{equation}
As in the case of Theorem 1, we will establish that quasi-LU equivalence and SO equivalence are nearly equivalent notions. Then after applying a broader generalization of Specht's criterion \cite[Corollary 3.]{futorny2017specht}, we will show that quasi-LU equivalence, and hence LU equivalence in the case where $\rho$ and $\widehat{\rho}$ are $3$-qubit density matrices, essentially reduces to checking trace identities. 
\begin{theorem}[quasi-LU Equivalence and SO Equivalence: Tripartite States]
    Suppose $\rho$ and $\widehat{\rho}$ are two density matrices of tripartite states of the quantum system $\mathbb{C}^{d_1}\otimes \mathbb{C}^{d_2}\otimes \mathbb{C}^{d_3}$ with respective hypermatrix representations \\
    $\{T_1,T_2,T_3,T_{12},T_{13},T_{23},T_{123}\}$ and $\{\widehat{T}_1,\widehat{T}_2,\widehat{T}_3,\widehat{T}_{12},\widehat{T}_{13},\widehat{T}_{23},\widehat{T}_{123}\}$, and assume that $T_1,T_2,T_3$ and $T_{12},T_{13},T_{23}$ are nonzero. Then $\rho$ and $\widehat{\rho}$ are quasi-LU equivalent if and only if they are SO equivalent and the following conditions hold:
    \begin{enumerate}
        \item $\|\widehat{T}_i\| = \|T_i\|$ or $\|\widehat{T}_{jk}\| = \|T_{jk}\|$ for each $(i,j,k) = (1,2,3)$, $(2,1,3)$, and $(3,1,2)$; 
        \item $\widehat{T}_{i}^t\widehat{T}_{ij}\widehat{T}_{j}$ has the same sign as $T_{i}^tT_{ij}T_{j}$ for any one of $(i,j) = (1,2)$, $(2,3)$, or $(3,1)$, with $T_{ij} := T_{ji}^t$ if $i > j$. 
    \end{enumerate}
    Furthermore, in the case where $\rho$ and $\widehat{\rho}$ are $3$-qubit density matrices, either conditions plus the assumption that 
    \[\mathrm{Det}(\widehat{T}_i\circ \widehat{T}_{ij}) = \mathrm{Det}(T_i\circ T_{ij})\]
    for each $(i,j) = (1,2)$, $(1,3)$, and $(2,3)$, are equivalent to LU equivalence. 
\end{theorem}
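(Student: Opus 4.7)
The forward direction is routine once equation \eqref{outerproduct - multilinmatmult - relation} is in hand. Starting from quasi-LU equivalence with orthogonal matrices $O_i\in O(\delta_i)$, identity \eqref{outerproduct - multilinmatmult - relation} converts each $\widehat{T}_i\circ \widehat{T}_{jk}=(O_iT_i)\circ\bigl((O_j,O_k)*T_{jk}\bigr)$ into the corresponding SO equation in \eqref{SO Equivalence - Tripartite}, norm preservation $\|\widehat{T}_i\|=\|T_i\|$ follows from orthogonality, and the ``sign'' condition is in fact an exact equality:
\[\widehat{T}_i^t\widehat{T}_{ij}\widehat{T}_j=T_i^tO_i^tO_iT_{ij}O_j^tO_jT_j=T_i^tT_{ij}T_j.\]

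For the converse, the plan is to extract a scalar from each of the three outer-product SO relations by matricizing them into rank-one matrix equalities, exactly as in the bipartite proof but performed three times. For the first relation, applying the mode-$1$ matricization \eqref{vectorization - formula1} together with \eqref{vectorization - formula3} to $\widehat{T}_1\circ \widehat{T}_{23}=(O_1T_1)\circ\bigl((O_2,O_3)*T_{23}\bigr)$ yields
\[\widehat{T}_1\,\mathrm{vec}(\widehat{T}_{23})^t=(O_1T_1)\,\bigl((O_3\otimes O_2)\mathrm{vec}(T_{23})\bigr)^t,\]
and because $T_1,T_{23}\neq 0$ the essential uniqueness (up to nonzero scalar) of a rank-one factorization produces a nonzero $\alpha_1\in\mathbb{R}$ with $\widehat{T}_1=\alpha_1O_1T_1$ and $\widehat{T}_{23}=\alpha_1^{-1}(O_2,O_3)*T_{23}$. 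Applying the same argument to the remaining two SO relations (using \eqref{vectorization - formula1} and \eqref{vectorization - formula2} for the two different outer-product shapes, together with \eqref{vectorization - formula3}) produces scalars $\alpha_2,\alpha_3$ with $\widehat{T}_i=\alpha_iO_iT_i$ and $\widehat{T}_{jk}=\alpha_i^{-1}(O_j,O_k)*T_{jk}$ whenever $\{i,j,k\}=\{1,2,3\}$. The norm hypothesis forces $|\alpha_i|=1$, so $\alpha_i\in\{\pm 1\}$; the sign hypothesis then fixes the global product, since a one-line substitution gives
\[\widehat{T}_i^t\widehat{T}_{ij}\widehat{T}_j=\alpha_i\alpha_j\alpha_k^{-1}\,T_i^tT_{ij}T_j=\alpha_1\alpha_2\alpha_3\,T_i^tT_{ij}T_j,\]
so matching signs forces $\alpha_1\alpha_2\alpha_3=+1$. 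Setting $O_i':=\alpha_iO_i\in O(\delta_i)$, this multiplicative relation is precisely what is needed for the seven identities $\widehat{T}_i=O_i'T_i$, $\widehat{T}_{ij}=(O_i',O_j')*T_{ij}$, and $\widehat{T}_{123}=(O_1',O_2',O_3')*T_{123}$ to hold simultaneously, which is quasi-LU equivalence.

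For the three-qubit refinement, I rewrite each $\widehat{T}_i\circ\widehat{T}_{ij}=(O_i',O_i',O_j')*(T_i\circ T_{ij})$ via \eqref{outerproduct - multilinmatmult - relation} and invoke Proposition 1, reducing the hypothesis $\mathrm{Det}(\widehat{T}_i\circ\widehat{T}_{ij})=\mathrm{Det}(T_i\circ T_{ij})\neq 0$ to a relation between $\det(O_i')$ and $\det(O_j')$. Since each $\det(O_i')^2=1$ automatically, combining the three resulting relations for $(i,j)\in\{(1,2),(1,3),(2,3)\}$ forces $\det(O_i')=+1$ for every $i$, hence $O_i'\in SO(3)$; the surjection \eqref{unitary-to-orthogonal} from $SU(2)$ onto $SO(3)$ then lifts each $O_i'$ to some $U_i\in SU(2)$, recovering \eqref{LU equivalence defn}. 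I expect the main obstacle to be the scalar-extraction step: unlike the bipartite case (one rank-one equation, one scalar), here three independent equations produce three a priori unrelated scalars, and one must verify carefully that the sign condition is precisely the global constraint needed to synchronize them into a single orthogonal triple, with the key bookkeeping subtlety being that the scalar attached to $\widehat{T}_{ij}$ is controlled by the ``missing'' index $k$ rather than by either of $i,j$.
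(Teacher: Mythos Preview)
Your proof is correct and follows essentially the same route as the paper's. The paper carries out the scalar-extraction step by an explicit chain of equalities (sandwiching the matricized rank-one identity between $\widehat{T}_i^t$ and $\mathrm{vec}(\widehat{T}_{jk})$ and then dividing), whereas you invoke the uniqueness of rank-one factorizations directly; and where the paper resolves the sign constraint $\alpha_1\alpha_2\alpha_3=1$ by a short case analysis (all positive versus exactly two negative, replacing $O_i$ by $-O_i$), you absorb the signs uniformly via $O_i':=\alpha_iO_i$. Both are the same argument in slightly different dress.
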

\begin{proof}
	If $\rho$ and $\widehat{\rho}$ are quasi-LU equivalent, then there exists $O_i\in O(\delta_i)$ such that 
    \[\widehat{T}_i = O_i*T_i,\quad \widehat{T}_{jk} = (O_j,O_k)*T_{jk},\quad\text{and}\quad \widehat{T}_{123} = (O_1,O_2,O_3)*T_{123}\]
	for $1\leq i\leq 3$ and $1\leq j<k\leq 3$. It immediately follows that $\|\widehat{T}_i\|=\|T_i\|$ and $\|\widehat{T}_{jk}\|=\|T_{jk}\|$. Moreover by equation \eqref{outerproduct - multilinmatmult - relation}
    \[\widehat{T}_i\circ \widehat{T}_{jk} = (O_i*T_i)\circ ((O_j,O_k)*T_{jk}) = (O_i,O_j,O_k)*(T_i\circ T_{jk}),\]
	proving that $\rho$ and $\widehat{\rho}$ are SO-equivalent. 
	
	Conversely, analogous to as before, for $(i,j,k) = (1,2,3)$ and $(i,j,k) = (2,1,3)$ we have
	\begin{align}
		0 &< \|\widehat{T}_i\|^2\|\widehat{T}_{jk}\|^2 \label{T_iT_{jk}line0} \\
        &= (\widehat{T}_i^t\widehat{T}_i)(\text{vec}(\widehat{T}_{jk})^t\text{vec}(\widehat{T}_{jk})) \label{T_iT_{jk}line1} \\
		&= \widehat{T}_i^t(\widehat{T}_i\circ \text{vec}(\widehat{T}_{jk}))\text{vec}(\widehat{T}_{jk}) \\
		&= \widehat{T}_i^t((\widehat{T}_i\circ \widehat{T}_{jk})_{(1)})\text{vec}(\widehat{T}_{jk}) \text{\quad by equation }\eqref{vectorization - formula1} \\
		&= 	\widehat{T}_i^t\Big(\big((O_i,O_j,O_k)*(T_i\circ T_{jk})\big)_{(1)}\Big)\text{vec}(\widehat{T}_{jk}) \text{\quad by assumption} \label{T_iT_{jk}line2} \\
		&= \widehat{T}_i^t\Big(\big((O_i*T_i)\circ ((O_j,O_k)*T_{jk})\big)_{(1)}\Big)\text{vec}(\widehat{T}_{jk}) \text{\quad by equation }\eqref{outerproduct - multilinmatmult - relation} \\
		&= \widehat{T}_i^t\big((O_i*T_i)\circ \text{vec}((O_j,O_k)*T_{jk})\big)\text{vec}(\widehat{T}_{jk}) \text{\quad by equation }\eqref{vectorization - formula1} \\
        &= \widehat{T}_i^t\Big(O_iT_i\mathrm{vec}\big((O_j,O_k)*T_{jk}\big)^t\Big)\mathrm{vec}(\widehat{T}_{jk}) \\
		&= \widehat{T}_i^tO_iT_i\text{vec}(T_{jk})^t(O_k\otimes O_j)^t\text{vec}(\widehat{T}_{jk}) \text{\quad by equation }\eqref{vectorization - formula3} \label{T_iT_{jk}line3} 
	\end{align}
	Again, note that $\widehat{T}_i^tO_iT_i$ and $\text{vec}(T_{jk})^t(O_k\otimes O_j)^t\text{vec}(\widehat{T}_{jk})$ are just numbers, so lines \eqref{T_iT_{jk}line0}, \eqref{T_iT_{jk}line1} and \eqref{T_iT_{jk}line3} imply that 
    \begin{equation} \label{3quditalpha_i} 
        \alpha_i:=\frac{\widehat{T}_i^tO_iT_i}{\widehat{T}_i^t\widehat{T}_i} = \frac{\text{vec}(\widehat{T}_{jk})^t\text{vec}(\widehat{T}_{jk})}{\text{vec}(T_{jk})^t(O_k\otimes O_j)^t\text{vec}(\widehat{T}_{jk})}\neq 0.
    \end{equation} 
	Now by assumption, we also have that 
    \begin{equation} \label{T_iT_{jk}line4}
        (\widehat{T}_i\circ \widehat{T}_{jk})_{(1)} = \big((O_i,O_j,O_k)*(T_i\circ T_{jk})\big)_{(1)}
    \end{equation}
    The left-hand side of equation \eqref{T_iT_{jk}line4} can be expressed as $\widehat{T}_i\mathrm{vec}(\widehat{T}_{jk})^t$, and from equations \eqref{T_iT_{jk}line2} and \eqref{T_iT_{jk}line3} the right hand side is equal to $O_iT_i\mathrm{vec}(T_{jk})^t(O_k\otimes O_j)^t$. Therefore, 
    \[\widehat{T}_i\underbrace{\text{vec}(\widehat{T}_{jk})^t\text{vec}(\widehat{T}_{jk})}_{>0} = O_iT_i\text{vec}(T_{jk})^t(O_k\otimes O_j)^t\text{vec}(\widehat{T}_{jk})\]
	and so
    \[\widehat{T}_i = \frac{O_iT_i\text{vec}(T_{jk})^t(O_k\otimes O_j)^t\text{vec}(\widehat{T}_{jk})}{\text{vec}(\widehat{T}_{jk})^t\text{vec}(\widehat{T}_{jk})}.\]
	Thus, by equation \eqref{3quditalpha_i}
    \begin{eqnarray}\label{widehat-T_i - tripartite case}
        \begin{split}
            \alpha_i\widehat{T}_i &= \frac{\mathrm{vec}(\widehat{T}_{jk})^t\mathrm{vec}(\widehat{T}_{jk})}{\text{vec}(T_{jk})^t(O_k\otimes O_j)^t\mathrm{vec}(\widehat{T}_{jk})}\cdot \frac{O_iT_i\mathrm{vec}(T_{jk})^t(O_k\otimes O_j)^t\mathrm{vec}(\widehat{T}_{jk})}{\mathrm{vec}(\widehat{T}_{jk})^t\mathrm{vec}(\widehat{T}_{jk})} \\
            &= O_iT_i.
        \end{split}
    \end{eqnarray}
	Similarly, 
    \[\underbrace{\widehat{T}_i^t\widehat{T}_i}_{>0}\mathrm{vec}(\widehat{T}_{jk})^t = \widehat{T}_i^tO_iT_i\text{vec}(T_{jk})^t(O_k\otimes O_j)^t\]
	and so 
    \[\mathrm{vec}(\widehat{T}_{jk})^t = \frac{\widehat{T}_i^tO_iT_i\mathrm{vec}(T_{jk})^t(O_k\otimes O_j)^t}{\widehat{T}_i^t\widehat{T}_i} \Rightarrow \mathrm{vec}(\widehat{T}_{jk}) = \frac{(O_k\otimes O_j)\mathrm{vec}(T_{jk})T_i^tO_i^t\widehat{T}_i}{\widehat{T}_i^t\widehat{T}_i}.\]
	Thus, by equation \eqref{3quditalpha_i}
    \begin{align*}
        \alpha_i^{-1}\mathrm{vec}(\widehat{T}_{jk}) &= \frac{\widehat{T}_i^t\widehat{T}_i}{\widehat{T}_i^tO_iT_i}\cdot \frac{(O_k\otimes O_j)\mathrm{vec}(T_{jk})T_i^tO_i^t\widehat{T}_i}{\widehat{T}_i^t\widehat{T}_i} \\
        &= (O_k\otimes O_j)\mathrm{vec}(T_{jk}),
    \end{align*}
	or equivalently,
    \begin{equation}\label{widehat-T_iT_{jk} - tripartite case}
        \alpha_i^{-1}\widehat{T}_{jk} = (O_j,O_k)*T_{jk}    
    \end{equation}
    By assumption, either $\|\widehat{T}_i\|=\|T_i\|$ or $\|\widehat{T}_{jk}\|=\|T_{jk}\|$, and in either case it follows that $|\alpha_i|=1$, hence $\alpha_i=\pm 1$ since the $T_i$ and $T_{jk}$ are real. The calculations required to verify the case when $(i,j,k) = (3,1,2)$ are almost identical to that above (the only difference is that equation \eqref{vectorization - formula2} will be needed instead of \eqref{vectorization - formula1}), so they are omitted for the sake of brevity. Thus, for $(i,j,k) = (1,2,3), (2,1,3), (3,1,2)$ we have that
    \begin{eqnarray}\label{quasi-LU equivalence - Tripartite}
        \begin{split}
            \widehat{T}_i &= \alpha_i^{-1} O_i*T_i, \\
            \widehat{T}_{jk} &= \alpha_i(O_j,O_k)*T_{jk}, \\
            \widehat{T}_{123} &= (O_1,O_2,O_3)*T_{123},
        \end{split}
    \end{eqnarray}
	with $\alpha_i = \pm 1$. Now observe that for any one of $(i,j) = (1,2)$, $(2,3)$, or $(3,1)$ (with $T_{ij}:=T_{ji}^t$ if $i>j$), we have:
    \[\widehat{T}_i^t\widehat{T}_{ij}\widehat{T}_j = (\alpha_i^{-1}T_i^tO_i^t)(\alpha_kO_iT_{ij}O_j^t)(\alpha_j^{-1}O_jT_j) = \alpha_i\alpha_j\alpha_kT_i^tT_{ij}T_j,\]
    and so assuming that $\widehat{T}_i^t\widehat{T}_{ij}\widehat{T}_j$ has the same sign as $T_i^tT_{ij}T_j$, it follows that either each of the $\alpha$ values are positive or exactly two of the $\alpha$ values are negative. If they are all positive, then by equation \eqref{quasi-LU equivalence - Tripartite} quasi-LU equivalence is established. Suppose on the other hand that exactly two of the $\alpha$ values are negative; without loss of generality say $\alpha_1=1$ and $\alpha_2,\alpha_3=-1$. Then in this case by equation \eqref{quasi-LU equivalence - Tripartite} we have:
    \begin{align*}
        \widehat{T}_1 &= O_1T_1 \\
        \widehat{T}_2 &= -O_2T_2 = \overline{O}_2*T_2 \\
        \widehat{T}_3 &= -O_3T_3 = \overline{O}_3*T_3\\ 
        \widehat{T}_{12} &= -O_1T_{12}O_2^t = O_1T_{12}\overline{O}_2^t \\
        \widehat{T}_{13} &= -O_1T_{13}O_3^t = O_1T_{13}\overline{O}_3^t \\
        \widehat{T}_{23} &= O_2T_{23}O_3^t = \overline{O}_2T_{23}\overline{O}_3^t \\
        \widehat{T}_{123} &= (O_1,O_2,O_3)*T = (O_1,\overline{O}_2,\overline{O}_3)*T,
    \end{align*}
    which also establishes quasi-LU equivalence. 
 
    Lastly, in the case where $\rho$ and $\widehat{\rho}$ are $3$-qubit density matrices, if we additionally assume that
    \begin{equation}\label{Det - tripartite case}
        \mathrm{Det}(\widehat{T}_i\circ \widehat{T}_{ij}) = \mathrm{Det}(T_i\circ T_{ij})
    \end{equation}
    for each $(i,j) = (1,2)$, $(1,3)$, and $(2,3)$, then by equation \eqref{outerproduct - multilinmatmult - relation} and Proposition 1 it follows that equation \eqref{Det - tripartite case} reduces to
	\[\det(O_i)^2\det(O_j)\mathrm{Det}(T_i\circ T_{ij}) = \mathrm{Det}(T_i\circ T_{ij}),\]
    implying that $\det(O_j)=1$ for each $j=1,2,3$. Thus, $O_1,O_2,O_3\in SO(3)$, in which case quasi-LU equivalence implies LU equivalence, proving that $\rho$ and $\widehat{\rho}$ are LU equivalent. 
\end{proof}
Now, if $T_{jk} = 0$ for any $(j,k) = (1,2)$, $(1,3)$, or $(2,3)$, then Theorem 3 does not apply as currently stated. However, in such a case, if we replace the assumption in \eqref{SO Equivalence - Tripartite} that
\[\widehat{T}_i\circ \widehat{T}_{jk} = (O_i,O_j,O_k)*(T_i\circ T_{jk})\quad\text{or}\quad \widehat{T}_{jk}\circ \widehat{T}_i = (O_j,O_k,O_i)*(T_{jk}\circ T_i)\]
with the assumption that
\[\widehat{T}_i\circ \widehat{T}_j\circ\widehat{T}_k = (O_i,O_j,O_k)*(T_i\circ T_j\circ T_k),\]
and assume that $\|\widehat{T}_i\| = \|T_i\|$ for each $i$, then the proof of Theorem 3 holds without issue after replacing $\widehat{T}_{jk}$ with $\widehat{T}_j\circ \widehat{T}_k$; that is, the same exact logic applies and we obtain the same conclusion. Thus, we can relax the assumption that $T_{12},T_{13},T_{23}\neq 0$, as long as we modify the definition of SO equivalence and the norm assumption appropriately. 

\subsection{SO-Equivalence and Trace Identities: Tripartite States}
We would now like to apply Theorem 3.4 from \cite{jing2015unitary} as we did in the bipartite case, reducing the quasi-LU equivalence to verifying trace identities; however, there are some issues. In particular, suppose $\rho$ and $\widehat{\rho}$ are tripartite states of the quantum system $\mathbb{C}^{d_1}\otimes \mathbb{C}^{d_2}\otimes \mathbb{C}^{d_3}$ with hypermatrix representations $\{T_1,T_2,T_3,T_{12},T_{13},T_{23},T_{123}\}$ and $\{\widehat{T}_1,\widehat{T}_2,\widehat{T}_3,\widehat{T}_{12},\widehat{T}_{13},\widehat{T}_{23},\widehat{T}_{123}\}$, respectively. Define
\[\{A_1,A_2,A_3,A_4\} := \{(T_{123})_{(1)},(T_1\circ T_{23})_{(1)},(T_2\circ T_{13})_{(2)},(T_{12}\circ T_3)_{(1)}\}\] 
and similarly
\[\{B_1,B_2,B_3,B_4\} := \{(\widehat{T}_{123})_{(1)},(\widehat{T}_1\circ \widehat{T}_{23})_{(1)},(\widehat{T}_2\circ \widehat{T}_{13})_{(2)},(\widehat{T}_{12}\circ \widehat{T}_3)_{(1)}\}\]
(note that these are all $\delta_1\times \delta_2\delta_3$ matrices). Then by Theorem 3.4 in \cite{jing2015unitary}, there exists $O_1\in O(\delta_1)$ and $\widetilde{O}\in O(\delta_2\delta_3)$ such that 
\begin{align*}
	(\widehat{T}_{123})_{(1)} &= O_1(T_{123})_{(1)}\widetilde{O}^t \\
    (\widehat{T}_1\circ \widehat{T}_{23})_{(1)} &= O_1(T_1\circ T_{23})_{(1)}\widetilde{O}^t \\
	(\widehat{T}_2\circ \widehat{T}_{13})_{(2)} &= O_1(T_2\circ T_{13})_{(2)}\widetilde{O}^t \\
    (\widehat{T}_{12}\circ \widehat{T}_3)_{(1)} &= O_1(T_{12}\circ T_3)_{(1)}\widetilde{O}^t
\end{align*}
if and only if 
\[\Tr(w\{A_iA_j^t\}) = \Tr(w\{B_iB_j^t\})\qquad (1\leq i\leq j\leq 4)\]
for any word $w$. But this is not good enough! 

Indeed, we need $\widetilde{O}^t$ to be a tensor product of a $\delta_2\times \delta_2$ orthogonal matrix with a $\delta_3\times \delta_3$ orthogonal matrix; otherwise, we do not have SO equivalence and hence neither quasi-LU equivalence. To guarantee that $\widetilde{O}^t\in O(\delta_2)\otimes O(\delta_3)$, we need some additional assumptions and a more powerful generalization of Specht's criterion. 

\subsubsection{Applying the Partial Trace and Futorny's Theorem}
Let $\rho$ be a tripartite state of the quantum system $\mathbb{C}^{d_1}\otimes \mathbb{C}^{d_2}\otimes \mathbb{C}^{d_3}$ with hypermatrix representation \\
$\{T_1,T_2,T_3,T_{12},T_{13},T_{23},T_{123}\}$. Recall that the partial trace over the first subsystem $\Tr_1$ is defined as 
\[\Tr_1(\rho) := \sum\limits_{i=0}^{d_1-1}(\bra{i}\otimes I_{d_2}\otimes I_{d_3}\rho(\ket{i}\otimes I_{d_2}\otimes I_{d_3})\]
with $\{\ket{i}\}$ the computational basis for $\mathbb{C}^{d_1}$. This reduces to the density matrix of the following bipartite state
\[\frac{1}{d_2d_3}\left(I_{d_2}\otimes I_{d_3}+\sum\limits_{i=1}^{d_2^2-1}T_1^i\lambda_i^2\otimes I_{d_3}+\sum\limits_{i=1}^{d_3^2-1}T_2^iI_{d_2}\otimes \lambda_i^3+\sum\limits_{i,j=1}^{d_2^2-1,d_3^2-1}T_{12}^{ij}\lambda_i^2\otimes \lambda_j^3\right) =: \rho_{23}.\]
The partial traces over the second and third subsystems, $\Tr_2$ and $\Tr_3$ are analogously defined. Note that it is immediately clear that if two tripartite states $\rho$ and $\widehat{\rho}$ are quasi-LU equivalent, then their partial traces must also be quasi-LU equivalent; that is, the partial trace is a quasi-LU invariant.  

In \cite[Corollary 1]{futorny2017specht}, Futorny et al. apply the theory of quiver representations to generalize Specht's criterion \cite{specht1940theorie,pearcy1962complete} and Theorem 3.4 in \cite{jing2015unitary}. Their result and the necessary notions from quiver representation theory to understand it are discussed in the appendix. For our purposes, note that if we apply Corollary 1 in \cite{futorny2017specht} to the quiver
\begin{center}
    \begin{tikzcd}
        1 \arrow[bend left = 80, drrr] \arrow[bend left = 65,drrr] \arrow[bend left = 50, drrr,"\vdots",labels=below] \arrow[bend left=10,drrr] &&& \\
		&&& 3 \\
		2 \arrow[bend right = 10,urrr] \arrow[bend right = 20,urrr] \arrow[bend right = 30, urrr] \arrow[bend right = 70,urrr,"\vdots",labels=above] &&&  
    \end{tikzcd}
\end{center}
which has $k>0$ arrows from $1$ to $3$ and $l-k>0$ arrows from $2$ to $3$, and we assign to it a matrix representation $A$ with dimension $\dim(n_1,n_2,m)$, then we obtain the following result:
\begin{lemma}
    If $A_1,...,A_k,A_{k+1},...,A_l$ and $B_1,...,B_k,B_{k+1},...,B_l$ are real matrices each with $m$ rows, and with $A_1,...,A_k,B_1,...,B_k$ having $n_1$ columns and with $A_{k+1},...,A_l,B_{k+1},...,B_l$ having $n_2$ columns, then the following two statements are equivalent
    \begin{itemize}
        \item there exists orthogonal matrices $O\in O(m)$, $\widetilde{O}_1\in O(n_1)$, and $\widetilde{O}_2\in O(n_2)$ such that 
        \begin{equation}\label{lemma - condition1}
            (B_1,...,B_k,B_{k+1},...,B_l) = (OA_1\widetilde{O}_1,...,OA_k\widetilde{O}_1,OA_{k+1}\widetilde{O}_2,...,OA_l\widetilde{O}_2)
        \end{equation}
	   \item 
	   \begin{equation} \label{lemma - condition2}
        \begin{split} 
		  &\Tr(w\{A_{i_1}A_{j_1}^t,A_{i_2}A_{j_2}^t\big|1\leq i_1\leq j_1\leq k,k+1\leq i_2\leq j_2\leq l\}) \\
		  &= \Tr(w\{B_{i_1}B_{j_1}^t,B_{i_2}B_{j_2}^t\big|1\leq i_1\leq j_1\leq k,k+1\leq i_2\leq j_2\leq l\})
        \end{split}
	   \end{equation} 
	for every word $w$ in noncommuting variables. 
    \end{itemize}
    Furthermore, it suffices to check \eqref{lemma - condition2} for all words of length at most $[(r+2)(n_1+n_2+m)]^2$, where $r$ is the smallest positive integer such that $\frac{r(r+1)}{2}\geq \max\{k,l-k\}$.
\end{lemma}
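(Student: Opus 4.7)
The plan is to specialize Corollary 1 of Futorny--Horn--Sergeichuk directly to the depicted quiver with vertex set $\{1,2,3\}$, $k$ arrows from $1$ to $3$, $l-k$ arrows from $2$ to $3$, and dimension vector $(n_1,n_2,m)$. A matrix representation of this quiver is exactly a tuple consisting of $m\times n_1$ matrices $A_1,\ldots,A_k$ (one per arrow out of vertex $1$) together with $m\times n_2$ matrices $A_{k+1},\ldots,A_l$ (one per arrow out of vertex $2$); hence the hypothesis of the lemma supplies two such representations, and it remains to translate the corollary's conclusion into the form stated.

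First I would verify that the standard notion of orthogonal isomorphism between two such representations, with the usual Euclidean inner product at each vertex, reproduces condition \eqref{lemma - condition1}. An isometry is specified by a triple $(\widetilde{O}_1,\widetilde{O}_2,O)\in O(n_1)\times O(n_2)\times O(m)$ acting on an arrow $A:V_i\to V_3$ by $A\mapsto O\,A\,\widetilde{O}_i^{-1}$; since $\widetilde{O}_i^{-1}=\widetilde{O}_i^t$ and $O(n_i)$ is closed under transpose, relabeling $\widetilde{O}_i\mapsto\widetilde{O}_i^t$ matches the form $OA\widetilde{O}_i$ appearing in \eqref{lemma - condition1}.

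Second I would match the complete set of trace invariants provided by Corollary 1 with the alphabet in \eqref{lemma - condition2}. The corollary asserts that two representations are isometric if and only if, in the doubled quiver obtained by adjoining a reverse arrow $a^\ast$ (realized as the transpose $A_a^t$) to each original arrow $a$, every closed walk has equal trace in the two representations. Because vertex $3$ is the only sink and arrows emanating from distinct source vertices cannot be concatenated head-to-tail, every closed walk based at vertex $3$ is a word in the $m\times m$ matrices $A_iA_j^t$ with $(i,j)\in\{1,\ldots,k\}^2$ or $(i,j)\in\{k{+}1,\ldots,l\}^2$, which gives the two sub-alphabets of \eqref{lemma - condition2}. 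Cyclicity of the trace and the identity $\Tr(w)=\Tr(w^t)$ (which simultaneously reverses the word and swaps the indices inside each letter) let us restrict to $i\leq j$ in each sub-alphabet; closed walks based at vertices $1$ or $2$ are cyclic rotations of walks based at $3$, so they contribute nothing new.

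Finally, the word-length bound is obtained by substituting our quiver data into the general bound in Corollary 1: the total vertex dimension is $n_1+n_2+m$, and the representation-theoretic factor is controlled by the maximum number of parallel arrows between any pair of vertices, which here is $\max\{k,l-k\}$; taking $r$ to be the smallest positive integer with $r(r+1)/2\geq\max\{k,l-k\}$ then recovers the stated $[(r+2)(n_1+n_2+m)]^2$. I expect the main obstacle to be not the logical equivalence itself, which is a direct specialization once the dictionary above is set up, but rather the bookkeeping required to confirm that the intrinsic quiver statistics underlying Futorny et al.'s bound specialize to precisely $\max\{k,l-k\}$ and $n_1+n_2+m$, and to justify that walks at the source vertices genuinely add no new invariants beyond those at vertex $3$.
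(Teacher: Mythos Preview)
Your proposal is correct and follows exactly the paper's approach: the paper does not supply a standalone proof of this lemma but simply states that it is obtained by applying Corollary~1 of Futorny--Horn--Sergeichuk to the depicted quiver with dimension vector $(n_1,n_2,m)$, and your write-up fills in precisely the dictionary (isometry $\leftrightarrow$ condition~\eqref{lemma - condition1}, oriented cycles in $\widetilde{Q}$ $\leftrightarrow$ words in the $A_iA_j^t$, and the length bound) that the paper leaves implicit. Your observation that every cycle must pass through vertex~$3$ and hence is a cyclic rotation of a vertex-$3$ cycle is exactly the right way to handle the source vertices.
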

Therefore, if for any one of $(i,j_1,j_2,k) = (1,2,2,3)$, $(2,1,1,3)$, or $(3,3,1,2)$
we set
\[(A_1,A_2,A_3,A_4,A_5) := ((T_{123})_{(i)},(T_1\circ T_{23})_{(i)},(T_2\circ T_{13})_{(j_1)},(T_{12}\circ T_3)_{(i)},T_i)\]
and 
\[(B_1,B_2,B_3,B_4,B_5) := ((\widehat{T}_{123})_{(i)},(\widehat{T}_1\circ \widehat{T}_{23})_{(i)},(\widehat{T}_2\circ \widehat{T}_{13})_{(j_1)},(\widehat{T}_{12}\circ \widehat{T}_3)_{(i)},\widehat{T}_i),\]
then by Lemma 1 there exists $O_i\in O(\delta_i)$, $\widetilde{O}\in O(\delta_{j_2}\delta_k)$, and $O\in O(1)$ (hence $O = [\pm 1]$, and without loss of generality we may assume that $O = [1]$) such that 
\begin{equation}\label{orthogonal similarity}
    \begin{split}
        (\widehat{T}_{123})_{(i)} &= O_i(T_{123})_{(i)}\widetilde{O}^t \\
        (\widehat{T}_1\circ \widehat{T}_{23})_{(i)} &= O_i(T_1\circ T_{23})_{(i)}\widetilde{O}^t \\
        (\widehat{T}_2\circ \widehat{T}_{13})_{(j_1)} &= O_i(T_2\circ T_{13})_{(j_1)}\widetilde{O}^t \\
        (\widehat{T}_{12}\circ \widehat{T}_3)_{(i)} &= O_i(T_{12}\circ T_3)_{(i)}\widetilde{O}^t \\
        \widehat{T}_i &= O_iT_i
    \end{split}
\end{equation}
if and only if 
\begin{equation} \label{initial trace identities}
    \Tr(w\{A_{\alpha}A_{\beta}^t,A_5A_5^t\}) = \Tr(w\{B_{\alpha}B_{\beta}^t,B_5B_5^t\})
\end{equation}
with $1\leq \alpha\leq \beta\leq 4$, for every word $w$ in noncommuting variables of length at most $25(1+\delta_i+\delta_{j_2}\delta_k)^2$. 

Suppose \eqref{initial trace identities} holds for all such words. If in addition we assume that $\Tr_i(\rho)$ is quasi-LU equivalent to $\Tr_i(\widehat{\rho})$ -which we can easily check by Theorem 2-, then there exists $O_{j_2}\in O(\delta_{j_2})$ and $O_k\in O(\delta_k)$ such that 
\begin{equation}\label{partial trace quasi-LU}
    \widehat{T}_{j_2} = O_{j_2}*T_{j_2},\qquad \widehat{T}_k = O_k*T_k,\qquad \text{and}\qquad \widehat{T}_{j_2k} = (O_{j_2},O_k)*T_{j_2k}.
\end{equation}
Consequently, from equations \eqref{outerproduct - multilinmatmult - relation} and \eqref{matricization - formula1}-\eqref{matricization - formula3} it follows that
\begin{equation}\label{case1}
    (T_i\circ T_{j_2k})_{(i)}^t(T_i\circ T_{j_2k})_{(i)}\widetilde{O}^t = (T_i\circ T_{j_2k})_{(i)}^t(T_i\circ T_{j_2k})_{(i)}(O_k\otimes O_{j_2})^t
\end{equation}
for $(i,j_1,j_2,k) = (1,2,2,3)$ or $(2,1,1,3)$, and 
\begin{equation}\label{case2}
    (T_{j_2k}\circ T_{i})_{(i)}^t(T_{j_2k}\circ T_{i})_{(i)}\widetilde{O}^t = (T_{j_2k}\circ T_{i})_{(i)}^t(T_{j_2k}\circ T_{i})_{(i)}(O_k\otimes O_{j_2})^t
\end{equation}
for $(i,j_1,j_2,k) = (3,3,1,2)$. 

Now, in the instances where $(i,j_1,j_2,k) = (1,2,2,3)$ or $(2,1,1,3)$,
if the square matrix $(T_i\circ T_{j_2k})_{(i)}^t(T_1\circ T_{23})_{(i)}$ of the equation \eqref{case1} is invertible, it follows that
\[\widetilde{O} = O_k\otimes O_{j_2};\]
similarly, in the case where $(i,j_1,j_2,k) = (3,3,1,2)$, if it happens that the square matrix $(T_{j_2k}\circ T_{i})_{(i)}^t(T_{j_2k}\circ T_{i})_{(i)}$ of \eqref{case2} is invertible, then it follows that 
\[\widetilde{O} = O_k\otimes O_{j_2}.\]
Therefore, in any such instance, we have
\begin{equation}
    \begin{split}
	   (\widehat{T}_{123})_{(i)} &= O_i(T_{123})_{(i)}(O_k\otimes O_{j_2})^t \\
       (\widehat{T}_1\circ \widehat{T}_{23})_{(i)} &= O_i(T_1\circ T_{23})_{(i)}(O_k\otimes O_{j_2})^t \\
	   (\widehat{T}_2\circ \widehat{T}_{13})_{(j_1)} &= O_i(T_2\circ T_{13})_{(j_1)}(O_k\otimes O_{j_2})^t \\ 
       (\widehat{T}_{12}\circ \widehat{T}_3)_{(i)} &= O_i(T_{12}\circ T_3)_{(i)}(O_k\otimes O_{j_2})^t.
    \end{split}
\end{equation}
This proves that $\rho$ and $\widehat{\rho}$ are SO equivalent, and hence quasi-LU equivalent -assuming also that the norm and sign conditions are met- by Theorem 3. Thus, we have the following theorem which characterizes quasi-LU equivalence:
\begin{theorem}[Characterizing quasi-LU Equivalence] 
	Suppose $\rho$ and $\widehat{\rho}$ are tripartite states of the quantum system $\mathbb{C}^{d_1}\otimes \mathbb{C}^{d_2}\otimes \mathbb{C}^{d_3}$ with respective matrix representations $\{T_1,T_2,T_3,T_{12},T_{13},T_{23},T_{123}\}$ and $\{\widehat{T}_1,\widehat{T}_2,\widehat{T}_2,\widehat{T}_{12},\widehat{T}_{13},\widehat{T}_{23},\widehat{T}_{123}\}$, and assume $T_1,T_2,T_3,T_{12},T_{13},T_{12}\neq 0$. For $(i,j_1,j_2,k) = (1,2,2,3)$, $(2,1,1,3)$, or $(3,3,1,2)$, let
    \[(A_1,A_2,A_3,A_4,A_5) := ((T_{123})_{(i)},(T_1\circ T_{23})_{(i)},(T_2\circ T_{13})_{(j_1)},(T_{12}\circ T_3)_{(i)},T_i)\]
    and 
    \[(B_1,B_2,B_3,B_4,B_5) := ((\widehat{T}_{123})_{(i)},(\widehat{T}_1\circ \widehat{T}_{23})_{(i)},(\widehat{T}_2\circ \widehat{T}_{13})_{(j_1)},(\widehat{T}_{12}\circ \widehat{T}_3)_{(i)},\widehat{T}_i).\]
    Then $\rho$ and $\widehat{\rho}$ are quasi-LU equivalent if and only if the following criteria are met:
    \begin{enumerate}
        \item $\|\widehat{T}_i\| = \|T_i\|$ or $\|\widehat{T}_{j_2k}\| = \|
        T_{j_2k}\|$ for each $(i,j,k) = (1,2,3)$, $(2,1,3)$, and $(3,1,2)$;
        \item $\widehat{T}_{i}^t\widehat{T}_{ij}\widehat{T}_{j}$ has the same sign as $T_{i}^tT_{ij}T_{j}$ for any one of $(i,j) = (1,2)$, $(2,3)$, or $(3,1)$, with $T_{ij} := T_{ji}^t$ if $i > j$;
        \item $\Tr(w\{A_{\alpha}A_{\beta}^t,A_5A_5^t\}) = \Tr(w\{B_{\alpha}B_{\beta}^t,B_5B_5^t\})$ with $1\leq \alpha\leq \beta\leq 5$, for all words $w$ in noncommuting variables of length at most $25(1+\delta_i+\delta_{j_2k})^2$ for at least one choice of $(i,j_1,j_2,k)$; 
        \item The partial traces $\Tr_i(\rho)$ and $\Tr_i(\widehat{\rho})$ are quasi-LU equivalent for the choice of $i$ in 3.; and lastly, 
        \item if $(i,j_1,j_2,k)$ is chosen to be $(1,2,2,3)$ or $(2,1,1,3)$ in part 3., then the matrix $(T_i\circ T_{j_2k})_{(i)}^t(T_i\circ T_{j_2k})_{(i)}$ is invertible; alternatively if $(i,j_1,j_2,k)$ is chosen to be $(3,3,1,2)$ in part 3., then the matrix $(T_{j_2k}\circ T_i)_{(i)}^t(T_{j_2k}\circ T_k)_{(i)}$ is invertible.
    \end{enumerate}
    Furthermore, in the case where $\rho$ and $\widehat{\rho}$ are $3$-qubit density matrices, either conditions plus the assumption that 
    \[\mathrm{Det}(\widehat{T}_i\circ \widehat{T}_{ij}) = \mathrm{Det}(T_i\circ T_{ij})\]
    for each $(i,j) = (1,2)$, $(1,3)$, and $(2,3)$, are equivalent to LU equivalence. 
\end{theorem}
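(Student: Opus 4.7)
The plan is to package the discussion immediately preceding the theorem into a clean two-direction argument, using Theorem 3 as the bridge between SO equivalence and quasi-LU equivalence and Lemma 1 as the bridge between trace identities and orthogonal similarity of unfoldings. The forward direction is essentially immediate from the forward direction of Theorem 3 together with the cyclic/trace properties; the backward direction is where the five conditions must be combined carefully.

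For the forward direction, suppose $\rho$ and $\widehat{\rho}$ are quasi-LU equivalent with orthogonals $O_i \in O(\delta_i)$. Theorem 3 already gives conditions 1 and 2 (in fact they follow immediately from the definition, since multilinear multiplication by orthogonals preserves norms and the sign quantity $T_i^t T_{ij} T_j$ transforms by $\det$-trivial factors). Condition 4 follows because the partial trace commutes with local unitary conjugation, so it is a quasi-LU invariant. For condition 3, I would note that the hypermatrix identities defining SO equivalence, once unfolded using equations (\ref{matricization - formula1})--(\ref{matricization - formula3}), show that each $B_\alpha$ equals $O_i A_\alpha (O_k \otimes O_{j_2})^t$ and $B_5 = O_i A_5$; condition 5 need not be verified, and the trace identities of condition 3 are immediate from the cyclic property of the trace applied to products of $B_\alpha B_\beta^t = O_i A_\alpha A_\beta^t O_i^t$.

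For the backward direction, I would run the argument sketched in the paragraphs just above the theorem statement. First, condition 3 together with Lemma 1 produces $O_i \in O(\delta_i)$ and $\widetilde{O} \in O(\delta_{j_2}\delta_k)$ satisfying the system (\ref{orthogonal similarity}). Second, condition 4 combined with Theorem 2 (the bipartite trace-identity characterization) supplies orthogonal matrices $O_{j_2} \in O(\delta_{j_2})$ and $O_k \in O(\delta_k)$ implementing quasi-LU equivalence of $\operatorname{Tr}_i(\rho)$ and $\operatorname{Tr}_i(\widehat{\rho})$, so by (\ref{partial trace quasi-LU}) we know $\widehat{T}_{j_2 k} = (O_{j_2}, O_k) * T_{j_2 k}$. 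Third, substituting this into the relevant unfolded equality yields (\ref{case1}) or (\ref{case2}), and condition 5 (the invertibility of the Gram-type matrix) lets me cancel on the left, forcing $\widetilde{O}^t = (O_k \otimes O_{j_2})^t$. At this point the system (\ref{orthogonal similarity}) becomes exactly the unfolded form of the SO-equivalence relations in (\ref{SO Equivalence - Tripartite}), so by matricization formulas (\ref{matricization - formula1})--(\ref{matricization - formula3}) we conclude $\rho$ and $\widehat{\rho}$ are SO equivalent. Conditions 1 and 2 are then fed into Theorem 3 to upgrade SO equivalence to quasi-LU equivalence.

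For the $3$-qubit addendum, I would repeat verbatim the $\det$ argument from the final paragraph of the proof of Theorem 3: the added determinant conditions, together with Proposition 1 and equation (\ref{outerproduct - multilinmatmult - relation}), force $\det(O_j) = 1$ for each $j$, so the $O_j$ lie in $SO(3)$, under which quasi-LU equivalence is equivalent to LU equivalence via the surjection $SU(2) \to SO(3)$. The main obstacle I anticipate is bookkeeping rather than any substantive new idea: the theorem is a corollary of Theorem 3 plus Lemma 1, but one must handle all three choices of $(i,j_1,j_2,k)$ uniformly and verify that condition 5 genuinely suffices to pin down $\widetilde{O}$ as a Kronecker product in each of the three cases (the case $(3,3,1,2)$ requires equation (\ref{case2}) and (\ref{vectorization - formula2}) in place of (\ref{case1}) and (\ref{vectorization - formula1}), and this is where I would spend the most care). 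Once that step is verified the remainder is essentially quotation of prior results.
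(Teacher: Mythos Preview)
Your proposal is correct and follows essentially the same route as the paper: the backward direction runs Lemma~1 on condition~3 to obtain $O_i$ and $\widetilde{O}$, uses condition~4 to produce $O_{j_2},O_k$, invokes the invertibility hypothesis in condition~5 to force $\widetilde{O}=O_k\otimes O_{j_2}$ and hence SO equivalence, and then feeds conditions~1 and~2 into Theorem~3; the $3$-qubit addendum is handled by the same hyperdeterminant argument. Your observation that condition~5 ``need not be verified'' in the forward direction matches the paper's own acknowledgment (immediately after the theorem) that the invertibility criterion is not an LU invariant, so the ``only if'' is to be read with that caveat.
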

By Theorem 2, checking the quasi-LU equivalence of the partial traces $\Tr_i(\rho)$ and $\Tr_i(\widehat{\rho})$ reduces to checking trace identities and norms. Thus, by Theorem 4, the difficult problem of determining quasi-LU equivalence between tripartite quantum states (or LU equivalence in the case of $3$-qubit density matrices) has \textit{essentially} been reduced to checking norms and traces (and the hyperdeterminant identities in the case of $3$-qubit density matrices), all of which are LU invariants and thus reasonable to assume and fairly easy to check. The only potential issues in the criteria stated in Theorem 4 are the second and last one: the sign assumption and the invertibility of the matrix $(T_i\circ T_{j_2k})_{(i)}^t(T_i\circ T_{j_2k})_{(i)}$ or $(T_{j_2k}\circ T_i)_{(i)}^t(T_{j_2k}\circ T_k)_{(i)}$ (depending on which choice of $(i,j_1,j_2,k)$ was chosen). Unfortunately, these are not LU invariants, and so consequently Theorem 3 may not apply to some tripartite states which are in fact LU/quasi-LU equivalent. Fortunately, since one has the freedom to choose which $(i,j_1,j_2,k)$ to pick (since any choice results in the same conclusion), there is some flexibility with criterion 5. Additionally, there is another way to at least partially overcome the unfortunate requirement of criterion 5.   

Indeed, if for the choice of $i$ made in criterion 3. we define $A_{i+1} := (T_i\circ T_{j_2}\circ T_3)_{(i)}$ and $B_{i+1} := (\widehat{T}_i\circ \widehat{T}_{j_2}\circ \widehat{T}_k)_{(i)}$, and if we replace the assumption that $\|\widehat{T}_{j_2k}\| = \|T_{j_2k}\|$ with $\|\widehat{T}_{j_2}\circ \widehat{T}_k\| = \|T_{j_2}\circ T_k\|$ in criteria 1. and 2., then assuming criteria 3. and 4. still hold, if $(T_i\circ T_{j_2}\circ T_k)_{(i)}^t(T_i\circ T_{j_2}\circ T_k)_{(i)}$ is invertible, then by applying the same logic as before we obtain
\[\widetilde{O} = (O_k\otimes O_{j_2}).\]
Therefore, by Theorem 3 and the subsequent discussion following the theorem, it follows that $\rho$ and $\widehat{\rho}$ are quasi-LU equivalent. Thus, even if $(T_i\circ T_{j_2k})_{(i)}^t(T_i\circ T_{j_2k})_{(i)}$ or $(T_{j_2k}\circ T_i)_{(i)}^t(T_{j_2k}\circ T_i)_{(i)}$ is not invertible, Theorem 4 may still hold if $(T_i\circ T_{j_2}\circ T_k)_{(i)}^t(T_i\circ T_{j_2k}\circ T_i)_{(i)}$ is; additionally, in such an instance, we may also relax the assumption that $T_{j_2k}\neq 0$. While this does not fully solve the issue of the invertibility criterion, it at least provides even greater flexibility, minimizing the burden caused by the invertibility criterion. 

\subsection{Extending to Arbitrary Multipartite States}
In section Section 2.2 we defined quasi-LU equivalence for density matrices of arbitrary multipartite mixed states. For a general hypermatrix $A\in F^{n_1\times...\times n_d}$ and matrices $X_1\in F^{\bullet\times n_1}$,..., $X_d\in F^{\bullet\times n_d}$, we have that 
\begin{equation}\label{generalized unfolding property}
    \big((X_1,...,X_d)*A\big)_{(k)} = X_kA_{(k)}(X_d\otimes X_{d-1}\otimes...\otimes X_{k+1}\otimes X_{k-1}\otimes...\otimes X_1)^t
\end{equation}
for $k=1,...,d$. Therefore, similarly we may consider SO equivalence for arbitrary multipartite quantum states, however there are some complications/technicalities. In particular, when defining SO equivalence for higher partite density matrices, we need to ensure that after unfolding the hypermatrices into matrices that they are being multiplied on the left and right by the same orthogonal matrices so that then we may be able to apply Futorney's Corollary 1. and reduce quasi-LU equivalence to trace identities. For instance, consider the $4$-partite case; in this instance, one might be tempted to define SO equivalence as the following:
\begin{align*}
    \widehat{T}_{1234} &= (O_1,O_2,O_3,O_4)*T \\
    \widehat{T}_i\circ \widehat{T}_{jkl} &= (O_i,O_j,O_k,O_l)*T_i\circ T_{jkl}\qquad 1\leq j<k<l\leq 4; i\in [4]\setminus \{j,k,l\} \\
    \widehat{T}_{ij}\circ \widehat{T}_{kl} &= (O_i,O_j,O_k,O_l)*(T_{ij}\circ T_{kl})\qquad 1\leq k<l\leq 4; i\neq j\in [4]\setminus \{k,l\}.
\end{align*}
However this will not work when applying the logic of Theorem 3 to try to establish a correspondence between quasi-LU equivalence and SO equivalence. The issue is that from equation \eqref{generalized unfolding property}, we see that the term $\widehat{T}_{1234}$ can be unfolded in only the following $4$ ways:
\begin{align*}
    (\widehat{T}_{1234})_{(1)} &= O_1(T_{1234})_{(1)}(O_4\otimes O_3\otimes O_2)^t \\
    (\widehat{T}_{1234})_{(2)} &= O_2(T_{1234})_{(2)}(O_4\otimes O_3\otimes O_1)^t \\
    (\widehat{T}_{1234})_{(3)} &= O_3(T_{1234})_{(3)}(O_4\otimes O_2\otimes O_1)^t \\
    (\widehat{T}_{1234})_{(4)} &= O_4(T_{1234})_{(4)}(O_3\otimes O_2\otimes O_1)^t,
\end{align*}
and the term $\widehat{T}_2\circ \widehat{T}_{134}$ can only be unfolded in only the following $4$ ways:
\begin{align*}
    (\widehat{T}_2\circ \widehat{T}_{134})_{(1)} &= O_2(T_2\circ T_{134})_{(1)}(O_4\otimes O_3\otimes O_1)^t \\
    (\widehat{T}_2\circ \widehat{T}_{134})_{(2)} &= O_1(T_2\circ T_{134})_{(2)}(O_4\otimes O_3\otimes O_2)^t \\
    (\widehat{T}_2\circ \widehat{T}_{134})_{(3)} &= O_3(T_2\circ T_{134})_{(3)}(O_4\otimes O_1\otimes O_2)^t \\
    (\widehat{T}_2\circ \widehat{T}_{134})_{(4)} &= O_4(T_2\circ T_{134})_{(4)}(O_3\otimes O_1\otimes O_2)^t,
\end{align*}
and lastly the term $\widehat{T}_3\circ \widehat{T}_{124}$ can be unfolded in the only the following $4$ ways:
\begin{align*}
    (\widehat{T}_3\circ \widehat{T}_{124})_{(1)} &= O_3(T_3\circ T_{124})_{(1)}(O_4\otimes O_2\otimes O_1)^t \\
    (\widehat{T}_3\circ \widehat{T}_{124})_{(2)} &= O_1(T_3\circ T_{124})_{(2)}(O_4\otimes O_2\otimes O_3)^t \\
    (\widehat{T}_3\circ \widehat{T}_{124})_{(3)} &= O_2(T_3\circ T_{124})_{(3)}(O_4\otimes O_1\otimes O_3)^t \\
    (\widehat{T}_3\circ \widehat{T}_{124})_{(4)} &= O_4(T_3\circ T_{124})_{(4)}(O_2\otimes O_1\otimes O_3)^t.
\end{align*}
So in particular, we cannot find a common right factor for each of the terms corresponding to the hypermatrices $\widehat{T}_{1234}$, $\widehat{T}_2\circ \widehat{T}_2\circ \widehat{T}_{134}$, and $\widehat{T}_3\circ \widehat{T}_{124}$. However, there is a way to overcome this problem. 

For any term of the form: $\widehat{T}_{i}\circ T_{jkl}$ or $T_{ij}\circ T_{kl}$, we may associate it to the corresponding permutation of the subscripts $\pi = ijkl$. If all such permutations $\pi$ are obtained either the permutation $1234$ or obtained from $1234$ by multiplying by a transposition, then it follows that there will be a common right factor in the matrix unfolding. Dobes and Jing proved in \cite{dobes2024qubits} that if the hypermatrix $B$ is obtained from $B$ by permuting the subscripts, then they are necessarily LU equivalent. Consequently, instead of considering the hypermatrix $\widehat{T}_3\circ \widehat{T}_{124}$, which has corresponding permutation $\pi = 3124$ and which is not obtained from the permutation $1234$ by multiplying by a transposition (in particular $\pi$ is obtained by the product $(1234)(132)$), we may instead consider the hypermatrix $\widehat{T}_3\circ \widehat{T}_{214}$ (in which case the resulting permutation $3214$ is obtained from the product $(1234)(23)$). Ensuring that for each term of the form: $\widehat{T}_{i}\circ T_{jkl}$ or $T_{ij}\circ T_{kl}$, their corresponding subscripts differ at most only by a transposition guarantees that there is a common left and right factor for each term after unfolding. From here, we may then apply very similar logic used in proving Theorems 3 and 4 to establish a connection between quasi-LU equivalence and SO equivalence, and then \textit{essentially} reduce both checking trace identities. 

There are of course, more technicalities to consider (hence why essentially was italicized). In higher partite cases we will also need assumptions similar to criteria 1., 2., and 4. in Theorem 4. Also, if we are working with density matrices of $n$-qubits, then to guarantee LU equivalence we will need to check similar hyperdeterminant properties, which may be hard to compute. All of this is hypothetically possibly, however admittedly a bit cumbersome in even just the $4$-partite case; furthermore, the number of trace identities grows rapidly as we increase each part. Thus, while we can conclude that the problem of determining whether or not two $n$-partite states $\rho$ and $\widehat{\rho}$ of arbitrary dimension/parts are quasi-LU equivalent (or LU equivalent in the instance where they are $n$-qubit density matrices) \textit{essentially} reduces to checking trace identities, our methods are difficult to carry out in higher partite cases beyond perhaps the $4$-partite case. In practice, our methods will be most useful in the bipartite and tripartite cases. Therefore, we present our work not as a successor to the work of Li and Qiao in \cite{li2013classification}, but rather as an alternative. Perhaps in the case of degenerate $3$, $4$, or possibly even $5$-partite states where their method breaks down, one might want to consider our approach and/or use our proofs as an outline for suitable generalizations. 

\section{Conclusion}
In summary, we extended the method and results of Jing et al. in \cite{jing2016local} to density matrices of tripartite quantum states. Additionally, from our hypermatrix algebra framework, it is apparent that the proofs and results extend to density matrices of more general multipartite quantum states. However, there are practical limitations to extending our methods to states of many parts since the computations required grow rapidly with each additional part, and additionally there are a number of technicalities to consider. Thus, we think our approach to characterizing LU equivalence is best to be considered in conjunction with the methods of \cite{li2013classification}, perhaps applied to special classes of states based on their Fano form. 

\appendix
\section{Appendix}
The following results in this appendix are valid for complex inner product spaces and complex Euclidean spaces. However, they will be stated in terms of real Euclidean spaces since that is what is needed for the purposes of this paper. 
\subsection{Specht's Criterion and Generalizations}
The original, complex version of Specht's criterion, proved by Wilhelm Specht, can be found in \cite{specht1940theorie}. The real version of Specht's criterion, which was proven by Carl Pearcy in \cite{pearcy1962complete} is stated below:
\begin{proposition}[Specht's criterion (real version)]
    If $A$ and $B$ are real $n\times n$ matrices, then they are orthogonally similar, i.e. $B = O^tAO$ for some $n\times n$ orthogonal matrix $O$, if and only if 
    \begin{equation}\label{specht - real version}
        \Tr(w\{A,A^t\}) = \Tr(w\{B,B^t\})
    \end{equation}
    for every word $w(x,y)$ in two noncommuting variables. 
\end{proposition}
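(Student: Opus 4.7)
The forward implication is a quick check: if $B = O^t A O$ with $O \in O(n)$, then $B^t = O^t A^t O$, so $w(B,B^t) = O^t w(A,A^t) O$ for any word $w$, and cyclicity of the trace gives $\Tr(w(B,B^t)) = \Tr(w(A,A^t))$.

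For the converse, my plan is to turn the trace identities into a $*$-algebra isomorphism and then realize that isomorphism by orthogonal conjugation. Let $\mathcal{A}$ (respectively $\mathcal{B}$) denote the unital real subalgebra of $M_n(\mathbb{R})$ generated by $A$ and $A^t$ (respectively $B$ and $B^t$), and equip $M_n(\mathbb{R})$ with the positive-definite form $\langle X, Y\rangle := \Tr(XY^t)$. The transpose of a word in $B, B^t$ is obtained by reversing the letters and swapping $B \leftrightarrow B^t$, so the hypothesis $\Tr(w(A,A^t)) = \Tr(w(B,B^t))$ for \emph{all} words $w$ translates into $\langle u(A,A^t), v(A,A^t)\rangle = \langle u(B,B^t), v(B,B^t)\rangle$ for every pair of words $u, v$. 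By the nondegeneracy of $\langle \cdot, \cdot \rangle$ on each subalgebra, the linear map $\phi : \mathcal{A} \to \mathcal{B}$ defined on spanning words by $w(A, A^t) \mapsto w(B, B^t)$ is well defined, and it is readily checked to be a unital trace- and transpose-preserving algebra isomorphism.

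The main obstacle is promoting this abstract isomorphism to conjugation by a concrete orthogonal matrix. Here I would appeal to the real structure theory: because $\mathcal{A}$ is a real $*$-subalgebra of $M_n(\mathbb{R})$ on which the trace form is positive-definite, the real Artin--Wedderburn theorem decomposes $\mathcal{A}$ as an orthogonal direct sum of simple $*$-ideals, each isomorphic to a matrix algebra over $\mathbb{R}$, $\mathbb{C}$, or $\mathbb{H}$. This induces an orthogonal decomposition of $\mathbb{R}^n$ into isotypic components. The $*$-isomorphism $\phi$ transports the decomposition for $\mathcal{A}$ to that for $\mathcal{B}$, and because $\phi$ preserves traces it preserves multiplicities. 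A real Schur-type lemma on each isotypic piece then yields an orthogonal intertwiner $O \in O(n)$ with $B = O^t A O$. The delicate part of this final step is building the intertwiner to be \emph{orthogonal} on each isotypic component -- which is precisely where trace-preservation of $\phi$ is used, and is the technical heart distinguishing the real version from Specht's complex version.
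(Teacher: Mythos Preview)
The paper does not actually prove this proposition. It is stated in the appendix with a citation to Pearcy, and a few lines later the authors remark that it drops out of Futorny--Horn--Sergeichuk's Theorem~5 by attaching a single loop to a one-vertex quiver. So the ``paper's proof'' is: apply the general quiver criterion to the quiver $Q$ with one vertex and one loop; then $\widetilde{Q}$ has two loops, and the oriented cycles in $\widetilde{Q}$ are precisely the words in $A,A^t$, recovering the statement at once.

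Your approach is genuinely different and more classical: you build the transpose-closed algebras $\mathcal{A},\mathcal{B}$, use the trace hypotheses to manufacture a well-defined trace- and $*$-preserving algebra isomorphism $\phi:\mathcal{A}\to\mathcal{B}$, and then invoke real Artin--Wedderburn plus a Schur-type argument to realize $\phi$ by an orthogonal conjugation. This is essentially the line of argument in Pearcy's original paper (and in the complex case, Specht's), so it is a perfectly legitimate route. What it buys you over the quiver approach is self-containment: you do not need the machinery of quiver representations or the black box of Theorem~5. What the paper's route buys is brevity and uniformity, since Specht's criterion, Jing's simultaneous-orthogonal-equivalence criterion, and the tripartite Lemma~1 all become one-line specializations of the same quiver theorem.

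One caution: your final paragraph is where the real content lies, and you have only gestured at it. Passing from ``$\phi$ is a trace-preserving $*$-isomorphism of $\mathcal{A}$ onto $\mathcal{B}$'' to ``there exists $O\in O(n)$ with $\phi(X)=O^tXO$ for all $X\in\mathcal{A}$'' requires you to compare the inclusions $\mathcal{A}\hookrightarrow M_n(\mathbb{R})$ and $\mathcal{B}\hookrightarrow M_n(\mathbb{R})$ as $*$-representations on $\mathbb{R}^n$, not merely the abstract algebras. You need that trace-preservation forces equal multiplicities of each simple summand in the two representations, and then on each isotypic block you must actually construct an orthogonal (not merely invertible) intertwiner. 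Over $\mathbb{R}$ the simple blocks can be of type $\mathbb{R}$, $\mathbb{C}$, or $\mathbb{H}$, and in the latter two cases the naive ``pick any intertwiner and orthonormalize'' step needs care. None of this is wrong, but as written it is a plan rather than a proof; if you intend this to stand on its own you should spell out at least the multiplicity-matching and the construction of $O$ on one isotypic component.
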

One issue with Specht's criterion is that, as originally stated, it required infinitely many trace identities to check. However, in 1962 Carl Pearcy \cite{pearcy1962complete} that it suffices to verify conditions \eqref{specht - real version} for all words of length at most $2n^2$, and in 1986 T. Laffey showed in \cite{laffey1986simultaneous} that it suffices to verify conditions \eqref{specht - real version} for all words of length at most $\frac{2}{3}(n^2+2)$. Other tighter upper bounds have been derived, but note that $\frac{2}{3}(n^2+2)<n^2$ for all positive integers $n>2$, so for our purposes we conclude that (by Laffey's upper bound) it is enough to check \eqref{specht - real version} for all words of length at most $n^2$. 

In this paper, two generalizations of Specht's Criterion are considered, one due to Naihuan Jing \cite{jing2015unitary} and the other due to Futorny et al. \cite{futorny2017specht}. Jing's generalization is in fact a special case of Futorny's, so the rest of this appendix will be dedicated to reviewing Futorny's result. To better understand what Futorny et al. proved, we first need to review the basics of quiver representation theory. 

\subsubsection{Quiver Representations and Futorny's Theorem}
A \textbf{quiver} is a directed graph (loops and multiple arrows are allowed) used to represent vector spaces and algebras. A \textbf{representation} $\mathcal{A} = (\mathcal{A}_{\alpha},\mathcal{U}_v)$ of a quiver $Q$ over a field $F$ is given by assigning to each vertex $v$ a vector space $\mathcal{U}_v$ over $F$ to each arrow $\alpha:u\rightarrow v$ a linear transformation $\mathcal{A}_{\alpha}:u\rightarrow v$. The vector $\dim(\mathcal{A}) := (\dim(\mathcal{U}_1),...,\dim(\mathcal{U}_t))$ is called the \textbf{dimension} of the representation $\mathcal{A}$. 

An \textbf{oriented cycle} $\pi$ of length $l\geq 1$ in a quiver $Q$ is a sequence of arrows of the form
\begin{center}
	\begin{tikzcd}
		\pi: & v_1 \arrow[bend right, rrr, "\alpha_l",labels=above] & v_2 \arrow[l,"\alpha_1",labels=above] & ... \arrow[l,"\alpha_2",labels=above] & v_l \arrow[l,"\alpha_{l-1}",labels=above] 
	\end{tikzcd}
\end{center}
That is, it is a closed-directed walk (note that some vertices and arrows may repeat due to possible loops). For each representation $\mathcal{A}$ of a quiver $Q$ and any cycle $\pi$, we define $\mathcal{A}(\pi)$ to be the cycle of linear transformations
\begin{center}
	\begin{tikzcd}
		\mathcal{A}(\pi): & \mathcal{U}_{v_1} \arrow[bend right, rrr, "\mathcal{A}_{\alpha_l}",labels=above] & \mathcal{U}_{v_2} \arrow[l,"\mathcal{A}_{\alpha_1}",labels=above] & ... \arrow[l,"\mathcal{A}_{\alpha_2}",labels=above] & \mathcal{U}_{v_l} \arrow[l,"\mathcal{A}_{\alpha_{l-1}}",labels=above] 
	\end{tikzcd}
\end{center}
The \textbf{trace} of $\mathcal{A}(\pi)$ is defined as $\mathrm{trace}(\mathcal{A}(\pi)) := \mathrm{trace}(\mathcal{A}_{\alpha_1}\mathcal{A}_{\alpha_2}...\mathcal{A}_{\alpha_l})$. Note that $\mathrm{trace}(\mathcal{A}(\pi))$ does not depend on the choice of the initial vertex $v_1$ in the cycle since the trace is invariant under cyclic permutations. 

For each linear transformation $\mathcal{A}:\mathcal{U}\rightarrow \mathcal{V}$ between (real) Euclidean spaces $\mathcal{U}$ and $\mathcal{V}$, the \textbf{adjoint map} $\mathcal{A}^*:\mathcal{V}\rightarrow \mathcal{U}$ is given by $\langle \mathcal{A}x,y\rangle = \langle x,\mathcal{A}^*y\rangle$ for all $x\in \mathcal{U}$ and $y\in \mathcal{V}$. For a quiver $Q$ with vertices $v_1,...,v_t$, a \textbf{(real) Euclidean representation} $\mathcal{A} = (\mathcal{A}_{\alpha},\mathcal{U}_v)$ is given by assigning to each vertex $v$ a real Euclidean space $\mathcal{U}_v$, and to each arrow $\alpha:u\rightarrow v$ a linear transformation $\mathcal{A}_{\alpha}:\mathcal{U}_u\rightarrow \mathcal{U}_v$. Two (real) Euclidean representations $\mathcal{A} = (\mathcal{U}_{\alpha},\mathcal{U}_v)$ and $\mathcal{B} = (\mathcal{B}_{\alpha},\mathcal{V}_v)$ of $Q$ are \textbf{isometric} if there exists a family isometries (i.e. linear isomorphisms that preserve inner products) $\varphi_1:\mathcal{U}_1\rightarrow \mathcal{V}_1$,..., $\varphi_t:\mathcal{U}_t\rightarrow \mathcal{V}_t$  such that the diagram
\begin{center}
    \begin{tikzcd}
        \mathcal{U}_u \arrow[r,"\mathcal{A}_{\alpha}"] \arrow[d,"\varphi_u",labels=left] & \mathcal{U}_v \arrow[d,"\varphi_v"] \\
        \mathcal{V}_u \arrow[r,"\mathcal{B}_{\alpha}"] & \mathcal{V}_V
    \end{tikzcd}
\end{center}
commutes (i.e $\varphi_v\mathcal{A}_{\alpha} = \mathcal{B}_{\alpha}\varphi_u)$ for each arrow $\alpha:u\rightarrow v$. 

Now, for each quiver $Q$, we denote $\widetilde{Q}$ to be the quiver with double the number of arrows in $Q$, obtained from $Q$ by attaching the arrow $\alpha^*:v\rightarrow u$ for each arrow $\alpha:u\rightarrow v$ in $Q$. For each (real) Euclidean representation $\mathcal{A}$ of $Q$, we define the (real) Euclidean representation $\widetilde{A}$ of $\widetilde{Q}$ that coincides with $\mathcal{A}$ on $Q\subset \widetilde{Q}$ and that assigns to each new arrow $\alpha^*:v\rightarrow u$ the linear transformation $\widetilde{\mathcal{A}}_{\alpha^*}:=\mathcal{A}_{\alpha}^*:\mathcal{U}_v\rightarrow \mathcal{U}_u$ (i.e. the adjoint of $\mathcal{A}_{\alpha}$). For example, if $Q$ is given by 
\begin{center}
    \begin{tikzcd}
        Q: && u \arrow[loop left,"\beta"] \arrow[r,"\alpha"] & v 
    \end{tikzcd}
\end{center}
and $\mathcal{A}$ is a representation on $Q$ given by
\begin{center}
    \begin{tikzcd}
        \mathcal{A}: && \mathcal{U}_u \arrow[loop left,"\mathcal{A}_{\beta}"] \arrow[r,"\mathcal{A}_{\alpha}"] & \mathcal{U}_v
    \end{tikzcd}
\end{center}
then $\widetilde{Q}$ is given by 
\begin{center}
    \begin{tikzcd}
        \widetilde{Q}: && u \arrow[loop left,"\beta"] \arrow[loop below,"\beta^*"] \arrow[r,bend left,"\alpha"] & v \arrow[l,bend left,"\alpha^*"]
    \end{tikzcd}
\end{center}
then $\widetilde{\mathcal{A}}$ is given by 
\begin{center}
    \begin{tikzcd}
        \widetilde{\mathcal{A}}: && \mathcal{U}_u \arrow[loop left,"\mathcal{A}_{\beta}"] \arrow[loop below,"\widetilde{\mathcal{A}}_{\beta^*}"] \arrow[r,bend left,"\mathcal{A}_{\alpha}"] & \mathcal{U}_v \arrow[l,bend left,"\widetilde{\mathcal{A}}_{\alpha^*}"]
    \end{tikzcd}
\end{center}

Representations of quivers can be expressed in terms of matrices. If $[x]$ is the coordinate vector of $x\in \mathcal{U}$ in some orthonormal basis, then $\langle x,y\rangle = [x]^T [y]$ for all $x,y\in \mathcal{U}$. Furthermore, if $A$ is the matrix of the linear transformation $\mathcal{A}:\mathcal{U}\rightarrow \mathcal{V}$ in some orthonormal bases for $\mathcal{U}$ and $\mathcal{V}$, then $A^t$ is the matrix of the adjoint transformation $\mathcal{A}^*:\mathcal{V}\rightarrow \mathcal{U}$. A \textbf{matrix representation $A$ of dimension} $(d_1,...,d_t)$ of a quiver $Q$ is given by assigning to each arrow $\alpha:u\rightarrow v$ a matrix $A_{\alpha}$ of size $d_v\times d_u$ (note that we take $d_i:=0$ if the vertex $i$ does not have arrows). Two (real) Euclidean matrix representations $A$ and $B$ of $Q$ are \textbf{isometric} if there exists orthogonal matrices $O_1,...,O_t$ such that 
\[B_{\alpha} = O_v^{-1}A_{\alpha}O_u = O_v^tA_{\alpha}O_u\]
for every arrow $\alpha:u\rightarrow v$. Note that $\mathcal{A}$ and $\mathcal{B}$ are isometric if and only if $A$ and $B$ are isometric. Furthermore, if $\widetilde{\mathcal{A}}$ is the corresponding representation obtained from a representation $\mathcal{A}$ as described in the previous paragraph, then $\widetilde{A}$ is the matrix form of $\widetilde{\mathcal{A}}$. Lastly, for each oriented cycle $\pi$ in a quiver $Q$ and each matrix representation $A$ of $Q$, we denote $A(\pi) := A_{\alpha_1}A_{\alpha_2}...A_{\alpha_l}$. 

With this framework of quiver representations, Futorny et al. then prove in \cite{futorny2017specht} the following important theorem.
\begin{theorem}
    Two (real) Euclidean matrix representations $A$ and $B$ of a quiver $Q$ are isometric if and only if 
    \begin{equation}\label{futorney trace identities}
        \mathrm{trace}(\widetilde{A}(\pi)) = \mathrm{trace}(\widetilde{B}(\pi))
    \end{equation}
    for each oriented cycle $\pi$ in the quiver $\widetilde{Q}$. Moreover, it suffices to verify \eqref{futorney trace identities} for all cycles $\pi$ of length at most 
    \[\varphi((r+2)(d_1+...+d_t)),\]
    where $\varphi(n)$ is any bound for the sufficient word length in Specht's criterion (e.g. $\varphi(n)=n^2$) and $r$ is the minimal natural number such that 
    \[\frac{r(r+1)}{2}\geq \max\{m_{ij}|i\text{ and }j\text{ are vertices of }Q\}\]
    in which $m_{ij}$ is the number of arows from $j$ to $i$ in $Q$. 
\end{theorem}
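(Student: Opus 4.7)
The forward direction is immediate: an isometry furnishes orthogonal matrices $O_1,\ldots,O_t$ (one per vertex) with $\widetilde{B}_\alpha = O_{h(\alpha)}^t \widetilde{A}_\alpha O_{t(\alpha)}$ for every arrow $\alpha$ of $\widetilde{Q}$ (where $t(\alpha)$, $h(\alpha)$ denote the tail and head), and the orthogonal factors telescope around any oriented cycle to yield equality of cycle traces.

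For the converse, the plan is to reduce to the classical real Specht criterion (Proposition 2) by encoding the entire quiver representation as a single pair of matrices on an enlarged Euclidean space. Set $N_0 := d_1+\cdots+d_t$ and view an assignment as an $N_0\times N_0$ block matrix indexed by the vertices of $Q$; the arrow $\alpha:u\to v$ contributes the block $A_\alpha$ to the $(v,u)$ position of a global matrix $\mathbf{A}$, and then $\mathbf{A}^t$ automatically carries the data of the adjoint arrows in $\widetilde{Q}$. To accommodate several parallel arrows between the same ordered pair of vertices, I would replicate each vertex block into an $r\times r$ upper-triangular grid, where $r$ is the smallest integer with $r(r+1)/2 \geq \max_{i,j} m_{ij}$, and place the distinct parallel arrows in distinct triangular slots. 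Under this encoding, words in $\{\mathbf{A},\mathbf{A}^t\}$ correspond, up to combinatorial bookkeeping on the block indices, to oriented cycle products in $\widetilde{Q}$.

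The second ingredient is to force any orthogonal similarity $\mathbf{B} = O^t \mathbf{A} O$ delivered by Specht to be block-diagonal in the vertex decomposition, so that its diagonal blocks yield the $O_1,\ldots,O_t$ realizing the isometry of $\mathcal{A}$ and $\mathcal{B}$. The natural device is to augment $\mathbf{A}$ and $\mathbf{B}$ with extra block-diagonal ``tag'' operators whose spectra distinguish every vertex block (equivalently, loops adjoined at each vertex of an auxiliary quiver): their preservation under the orthogonal similarity pins the block structure of $O$ to the vertex decomposition. With the $r$-fold staggering and the tags together, the ambient matrix size becomes $N = (r+2)(d_1+\cdots+d_t)$, matching the factor in the announced bound; feeding the enlarged matrices into Proposition 2 with its Pearcy/Laffey word-length refinement then yields both the existence of the isometry and the cycle-length cutoff $\varphi((r+2)(d_1+\cdots+d_t))$.

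The main obstacle will be checking that this encoding is simultaneously faithful and efficient. Faithfulness means that equality of all word traces in $\{\mathbf{A},\mathbf{A}^t\}$ (together with the tags) really is equivalent to equality of all cycle traces in $\widetilde{Q}$, which requires inverting the block bookkeeping to recover each individual cycle trace from the enlarged trace identities. Efficiency means choosing the tags so that they neither conflate distinct vertex blocks (which would admit non-block-diagonal $O$) nor inflate the ambient dimension past $(r+2)N_0$. Once this combinatorial and linear-algebraic setup is correctly calibrated, Specht's classical theorem does the conceptual work, and the cycle-length bound is a direct translation of its word-length bound.
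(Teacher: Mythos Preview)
The paper does not prove this theorem; it is quoted in the appendix as a result of Futorny, Horn, and Sergeichuk \cite{futorny2017specht} and used as a black box. There is therefore no proof in the paper to compare your proposal against.

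That said, your sketch is a reasonable outline of the strategy actually used in the cited source: the reduction is indeed to the classical one-matrix Specht criterion via a block encoding of the quiver data, with parallel arrows distributed over an $r\times r$ triangular array and with additional diagonal ``marker'' blocks adjoined to force any orthogonal similarity to respect the vertex decomposition; this is what produces the $(r+2)(d_1+\cdots+d_t)$ ambient dimension and hence the stated cycle-length bound. Your acknowledged gaps---showing that word traces of the big matrix recover exactly the cycle traces of $\widetilde{Q}$, and designing tags that rigidify the block structure without exceeding the dimension budget---are precisely the technical points where the actual work lies, and they are nontrivial enough that your proposal should be read as a plausible roadmap rather than a proof.
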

Note that if we assign a matrix representation $A$ of dimension $n$ to the following quiver
\begin{center}
    \begin{tikzcd}
        Q: && 1 \arrow[loop left]
    \end{tikzcd}
\end{center}
and then apply Theorem 5, then we obtain Specht's Criterion. In \cite{jing2015unitary}, the Jing proves the following generalization of Specht's criterion:
\begin{proposition}
    Let $(A_1,...,A_k)$ and $(B_1,..,B_k)$ be two $k$-tuples of $m\times n$ matrices. Then there exists orthogonal matrices $O$ and $P$ such that 
    \begin{equation}
        (B_1,...,B_k) = (OA_1P,...,OA_kP)
    \end{equation}
    if and only if 
    \begin{equation}
        \Tr(w\{A_1^tA_1,...,A_i^tA_j,...,A_k^tA_k\}) = \Tr(w\{B_1^tB_1,...,B_i^tB_j,...,B_k^tB_k\})
    \end{equation}
    for every word $w(x_{11},...,x_{ij},...,x_{kk})$ in $k^2$ noncommuting variables. 
\end{proposition}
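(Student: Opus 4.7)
The plan is to derive Proposition 3 from Theorem 5 by choosing an appropriate quiver. Let $Q$ be the quiver with two vertices $u,v$ and $k$ parallel arrows $\alpha_1,\ldots,\alpha_k$ directed from $u$ to $v$, equipped with the matrix representations $A$ and $B$ defined by $A_{\alpha_i} := A_i$ and $B_{\alpha_i} := B_i$ (and dimensions $d_u = n$, $d_v = m$). Under the conventions in the appendix, $A$ and $B$ are isometric iff there exist orthogonal $O_u \in O(n)$ and $O_v \in O(m)$ with $B_i = O_v^t A_i O_u$ for each $i$; setting $O := O_v^t$ and $P := O_u$ recovers the relation $(B_1,\ldots,B_k) = (OA_1P,\ldots,OA_kP)$ appearing in Proposition 3. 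Thus isometry of the two representations is exactly the simultaneous-orthogonal-equivalence condition stated there.

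The next step is to unpack the trace identities that Theorem 5 provides for $\widetilde{Q}$. The doubled quiver $\widetilde{Q}$ has the original arrows $\alpha_i : u \to v$ together with adjoint arrows $\alpha_i^* : v \to u$, with $\widetilde{A}_{\alpha_i^*} = A_i^t$. Since $\widetilde{Q}$ is bipartite, every oriented cycle in it has even length and must alternate between some $\alpha_i$ and some $\alpha_j^*$. A cycle $\pi$ based at $v$ therefore has the form $\alpha_{j_1}^*,\alpha_{i_1},\ldots,\alpha_{j_p}^*,\alpha_{i_p}$, producing
\[\widetilde{A}(\pi) \;=\; (A_{j_1}^t A_{i_1})(A_{j_2}^t A_{i_2})\cdots (A_{j_p}^t A_{i_p}),\]
which is precisely a length-$p$ word in the $k^2$ matrices $A_i^t A_j$; conversely, every such word is realized this way. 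Cycles based at $u$ yield products of the form $A_{i_1}A_{j_1}^t \cdots A_{i_p}A_{j_p}^t$, whose trace agrees by cyclic invariance with a cycle based at $v$, so they furnish no new identities.

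Combining these two observations with Theorem 5 then yields Proposition 3. The main technical point I expect to dwell on is the cycle-to-word dictionary above: one must confirm that every word in the noncommuting variables $x_{ij}$ is genuinely realized by some oriented cycle in $\widetilde{Q}$ (including the length-zero case, which gives the trivial identity $\Tr(I_n) = n$), and that neither cycles based at $u$ nor immediately backtracking walks contribute any genuinely new identity. As a bonus, the word-length bound in Theorem 5 specializes cleanly: with $r(r+1)/2 \geq k$ and total dimension $n+m$, one obtains an explicit sufficient word length which, while not stated in Proposition 3, is an immediate byproduct of this approach.
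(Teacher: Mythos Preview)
Your proposal is correct and is exactly the approach the paper takes: immediately after stating Proposition 3, the paper simply assigns matrix representations of dimension $(n,m)$ to the two-vertex quiver with $k$ parallel arrows from vertex $1$ to vertex $2$ and invokes Theorem 5. Your write-up in fact supplies more detail than the paper does (the explicit cycle-to-word dictionary and the observation that cycles based at the other vertex give no new identities); the only cosmetic slip is that, under the paper's convention $A(\pi)=A_{\alpha_1}\cdots A_{\alpha_l}$, cycles based at $v$ produce words in $A_iA_j^t$ (size $m\times m$) rather than $A_i^tA_j$, but as you note this is immaterial by cyclic invariance of the trace.
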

Indeed, by assigning two matrix representations $A$ and $B$ of dimension $(n,m)$ on the following quiver 
\begin{center}
    \begin{tikzcd}
		1 \arrow[bend left = 45, rr] \arrow[bend left = 30, rr] \arrow[bend left = 15, rr, "\vdots",labels=below] \arrow[bend right=40,rr] && 2 
    \end{tikzcd}
\end{center}
and then applying Theorem 5, we obtain Jing's generalization of Specht's criterion. Lastly, note that by attaching two pairs of matrix representations $A$ and $B$ of dimensions $(\delta_{j_2}\delta_k,1,\delta_i)$ to the quiver
\begin{center}
    \begin{tikzcd}
        1 \arrow[bend left = 80, drrr] \arrow[bend left = 60, drrr] \arrow[bend left=40,drrr] \arrow[bend left=20,drrr] \arrow[drrr] &&& \\
		&&& 3 \\
		2 \arrow[bend right=50, urrr] &&&  
    \end{tikzcd}
\end{center}
and then applying Theorem 5, we obtain the equations \eqref{orthogonal similarity} from section 5.1.1.

\bibliographystyle{ieeetr}
\bibliography{references}

\end{document}